\documentclass[lettersize,journal]{IEEEtran}

\usepackage{amsmath,amsfonts}
\usepackage{algorithmic}
\usepackage{algorithm}
\usepackage{array}
\usepackage[caption=false,font=normalsize,labelfont=sf,textfont=sf]{subfig}
\usepackage{textcomp}
\usepackage{stfloats}
\usepackage{url}
\usepackage{verbatim}
\usepackage{graphicx}
\usepackage{cite}
\usepackage{threeparttable}
\usepackage{booktabs}
\usepackage{nicematrix}

\usepackage{threeparttable}
\usepackage{color}
\usepackage{amssymb}
\usepackage{mathrsfs}
\newtheorem{prop}{Proposition}%[section]
\newcommand{\dtau}{\Delta\tau}
\newcommand{\dep}{\Delta\varepsilon}
\newcommand{\ve}{\varepsilon}

\newcommand{\vp}{\varphi}
\newcommand{\ii}{\mathrm{i}}

\begin{document}

\title{ST-DDA: Dynamic Channel Estimation in the Doppler-Delay-Angle Domain via Sparse Subspace Tracking for TDD Systems}

\author{Xu Zhu and Tiejun Li
        % <-this % stops a space
% <-this % stops a space
\thanks{X. Zhu and T. Li are with Laboratory of Mathematics and Applied Mathematics, School of Mathematical Sciences, and Center for Machine Learning Research, Peking University, Beijing 100871, P.R. China}
\thanks{Email: xuzhu@pku.edu.cn (X. Zhu), tieli@pku.edu.cn (T. Li)}
\thanks{Corresponding author: Tiejun Li}
}

% The paper headers
% \markboth{Journal of \LaTeX\ Class Files,~Vol.~14, No.~8, August~2021}%
% {Shell \MakeLowercase{\textit{et al.}}: A Sample Article Using IEEEtran.cls for IEEE Journals}

% \IEEEpubid{%0000--0000/00\$00.00~\copyright~2021 IEEE
% pubid\ \ \ \ \ \ \ }
% Remember, if you use this you must call \IEEEpubidadjcol in the second
% column for its text to clear the IEEEpubid mark.

\maketitle

\begin{abstract}

Accurate full-band channel acquisition in frequency-hopping time-division duplex (TDD) systems is challenging because each sounding slot observes only a limited frequency subband, while conventional single-slot recovery cannot fully exploit historical observations. We propose ST-DDA, an online sparse-subspace tracking framework for latest-slot reconstruction in the Doppler--delay--angle (DDA) domain. We first show that the Doppler-domain representation remains energy-concentrated under moderate channel variation, thereby supporting windowed DDA-domain sparse recovery. A local stability analysis further shows that the substantial overlap between adjacent windows enables the preceding-window estimate to warm-start each window-specific recovery problem, allowing the optimization progress to be carried across slots under a fixed per-slot iteration budget. For computational tractability, ST-DDA employs the alternating subspace method, which restricts the regularized least-squares fidelity updates to support-induced subspaces, together with position-encoded convolutional reweighting that exploits local angular and Doppler structures. Experiments show that reweighted ST-DDA achieves more accurate and reliable reconstruction than dynamic compressed-sensing baselines, particularly for longer sounding intervals and larger frequency-hopping periods, while maintaining comparable per-slot runtime.

\end{abstract}

\begin{IEEEkeywords}

Alternating subspace method, channel estimation, Doppler--delay--angle domain, frequency hopping, dynamic channel tracking, structured sparse recovery.
    
\end{IEEEkeywords}

\section{Introduction}\label{sec:intro}

\IEEEPARstart{S}{parse} recovery~\cite{CS,CSTao} has been widely used for channel estimation and denoising. In modern wireless communication systems, reliable high-throughput transmission relies on accurate channel acquisition. Owing to the compressibility of virtual-domain channel representations induced by sparse scattering paths, channel reconstruction can be performed with limited sounding resources. In TDD 5G NR orthogonal frequency-division multiplexing (OFDM) systems, sounding reference signals (SRSs) are transmitted in the uplink by mobile users and used by the base station (BS) to acquire user-specific channel state information (CSI).

Since the available sounding resources are limited, there is an inherent trade-off between user multiplexing and estimation accuracy. For a single sounding group in a TDD cell, supporting more users requires finer division of the pilot resources, which reduces the per-user pilot budget and makes channel recovery more challenging. In practical SRS-based TDD sounding specified in the 3GPP SRS-based frequency-hopping configuration~\cite{3gpp_ts38211_v1850}, this trade-off is further shaped by block-hopping allocation, under which the sounding subcarriers assigned to each user are clustered within a subband\cite{MCC}. Consequently, channel recovery becomes an increasingly severe frequency-extrapolation problem~\cite{candes} as the number of users in each group increases. As shown in Fig.~\ref{fig:pilot_patterns}, at the latest sounding slot $t_0$, nearby historical pilots may be more informative than the current-slot observation for an unobserved subband far from $\mathcal B_{t_0}$. Therefore, although many well-established compressed-sensing methods are available~\cite{4518398,berger2009sparse,5454399}, a single-slot architecture cannot fully exploit the slow channel variation commonly observed in practice.

To leverage historical information for latest-slot recovery, filtering-based~\cite{KalmanCS} dynamic compressed-sensing (CS) architectures retain a single-slot observation model while incorporating historical information as prior knowledge. The recent PLAY-CS method~\cite{PLAY-CS} exploits the slowly varying support and propagates the corresponding nonzero coefficients through a Kalman-$\ell_2$ regularization term while retaining an $\ell_1$ penalty outside the predicted support. Related support-prediction frameworks are developed in~\cite{RM-BPDN,6638908,ModifiedCS}. Message-passing methods such as~\cite{6557543} instead propagate extrinsic information along a hidden Markov chain that models support transitions and Gaussian-approximated coefficient priors. Similar message-passing architectures are adopted in~\cite{10311526,2Stage}. Nevertheless, their eventual reduction to a single-slot observation model limits their ability to exploit richer temporal correlations.

Alternatively, slow temporal variation can be characterized through a sparse representation in the Doppler domain. As shown in Fig.~\ref{fig:pilot_patterns}, at the latest sounding slot $t_0$, recent historical observations can be collected into a temporal estimation window and jointly recovered by exploiting DDA-domain sparsity. This DDA-sparse principle also underlies the well-known orthogonal time frequency space (OTFS) modulation~\cite{P7_OTFS,P8_raviteja2019embedded}, in which observations over an OTFS frame are integrated for DDA-domain sparse recovery by exploiting the temporal stationarity of massive MIMO channels. Recent tensor-based work~\cite{TensorBased} further demonstrates that DDA-domain modeling remains effective under a sliding-window framework for channel prediction in high-mobility OFDM systems with full-band comb observations. Once the DDA-domain channel is obtained, the latest-slot channel can be recovered through a simple time--Doppler transformation. Related DDA structures are also exploited via matrix-pencil estimation~\cite{matpencil} and triple-beam channel acquisition~\cite{HF_skywave} under full-band observations.

\begin{figure}[t]
\centering
\includegraphics[width=0.9\columnwidth]{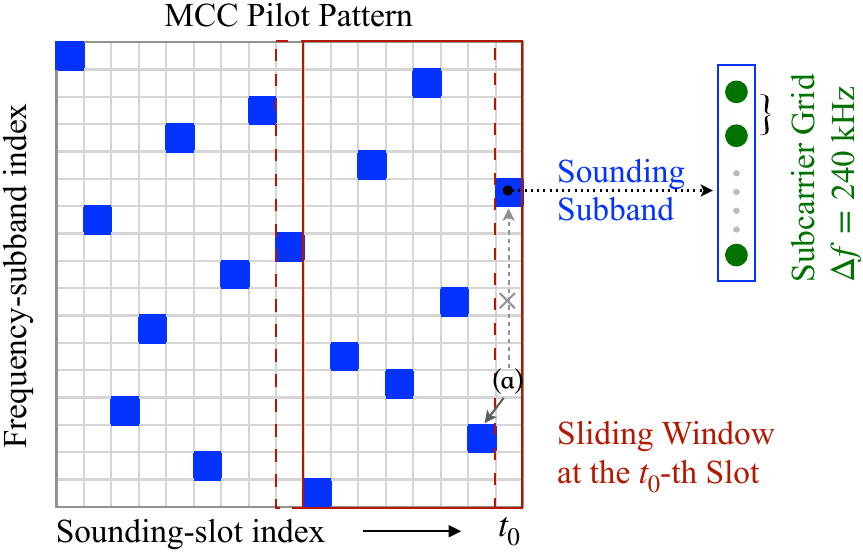}
\caption{Illustration of frequency-hopping SRS sounding and sliding-window multi-slot recovery. At the $t_0$-th slot, only one frequency subband is sounded on the subcarrier grid, while the recent observations are collected within a window of size $T$. For an unobserved subband such as (a), nearby historical pilots may be more informative than the current-slot observation. The dashed and solid windows represent two adjacent estimation windows.}
\label{fig:pilot_patterns}
\par\vspace{-2.0ex}
\end{figure}

In frequency-hopping TDD systems, however, the irregular observation pattern shown in Fig.~\ref{fig:pilot_patterns} severely complicates multi-slot joint recovery, rendering DDA-domain sparse reconstruction highly underdetermined. To assimilate information across subbands, the recent two-stage method in~\cite{2Stage} fuses the latest $n_{\rm hop}$ sounded subbands under a quasi-static path-parameter model over one hopping cycle, constructing a common cross-band representation in which each delay--angle path is associated with a path-specific Doppler offset. In contrast, the JCEP method in~\cite{JCEP} retains subband-specific DDA representations and couples them through shared DDA locations and statistical parameters. Since the Doppler-domain representation of each subband is recovered from repeated visits across multiple hopping cycles, its default configuration aggregates $K=10$ recent cycles with a short $0.5$-ms sounding interval. Under the lower-overhead sounding regime considered here, the same-subband revisit interval can become substantially longer; for example, $n_{\rm hop}=12$ with a $20$-ms sounding interval yields a $240$-ms revisit interval, making cycle-based temporal assimilation increasingly restrictive.

In this work, building on our companion works~\cite{MCC,dda-net}, we aggregate the recent observations within the latest window into a single DDA-sparse representation. Our analysis shows that moderate channel variation across sounding slots still permits an energy-concentrated Doppler-domain representation over the observation window, thereby preserving the underlying sparse-recovery structure. The resulting multi-slot joint recovery problem, however, can be computationally burdensome. We address this challenge from a dynamic tracking viewpoint. As shown in Fig.~\ref{fig:pilot_patterns}, two adjacent windows of size $T$ share $T-1$ overlapping slots and therefore admit closely related recovery problems. This motivates warm-starting the current-slot recovery with the previous-slot estimate and carrying the optimization progress across slots. A local stability analysis further supports this cross-window inheritance. Based on these, the contributions of the proposed ST-DDA framework are summarized as follows:

(i) \textbf{Dynamic multi-slot architecture for an extended operating regime:}
We develop an online multi-slot DDA recovery architecture in which each slot retains a window-specific recovery objective while inheriting the preceding-window estimate under a fixed iteration budget. By retaining an explicit DDA model, ST-DDA directly exploits temporal structures, particularly Doppler-domain priors, and better accommodates longer sounding intervals and larger frequency-hopping periods under synchronization imperfections, while maintaining a runtime comparable to that of dynamic CS.

(ii) \textbf{Support-induced computational efficiency:}
We formulate an alternating subspace method (ASM) solver~\cite{ASM} for the DDA-domain sparse recovery problem while preserving its tensor structure. As operator-splitting methods~\cite{LSCOA}, ASM and the alternating direction method of multipliers (ADMM)~\cite{ADMM:Boyd} share the main iteration structure, whereas ASM restricts the regularized least-squares (RLS) fidelity update to support-induced product subspaces while retaining full-space residual feedback. This enables tensor-structured processing of the RLS update through a Sylvester equation, reducing the computational cost and its sensitivity to the size of the outer estimation space, while the same splitting structure facilitates the incorporation of structured priors.

(iii) \textbf{Structured-prior-aided recovery:}
Following~\cite{NIPS2010_2d6cc4b2,TIPcrL,JunFang2015}, we introduce a position-encoded convolutional reweighting scheme that exploits local angular structures and the Doppler-domain energy concentration induced by channel variation. The resulting Doppler-domain prior is particularly beneficial at larger sounding intervals, while randomized neighborhood sampling limits the additional overhead.

The framework is organized modularly. Secs.~\ref{sub:multipath_channel_model} and~\ref{sub:Lasso_model} introduce the system model and fixed-window DDA recovery formulation, with the Doppler energy-concentration analysis given in Sec.~\ref{sub:group_sparse_representation}. Sec.~\ref{sec:Online_Tracking_Model} develops the online architecture and analyzes the warm-start stability across adjacent windows. The ASM solver and structured reweighting are presented in Secs.~\ref{sec:ASM_solver} and~\ref{sec:conv_reweigh}, followed by the history-aided estimation-space and synchronization modules in Secs.~\ref{sub:estimation_space} and~\ref{sub:non_ideal_MLE}. Numerical evaluations are provided in Sec.~\ref{sec:numerical_experiments}.

\section{System Model}

\subsection{Sounding Model: Hopping Pilots and Observations}

Consider a multi-user TDD massive MIMO-OFDM cell in which the users are organized into sounding groups. Each group comprises $n_{\rm hop}$ users and is allocated $N_{\mathrm f}$ full-band subcarriers, which are partitioned into $n_{\rm hop}$ nonoverlapping contiguous subbands of size $M_{\mathrm f}=N_{\mathrm f}/n_{\rm hop}$, as shown in Fig.~\ref{fig:pilot_patterns}. At each sounding slot, the subbands are assigned among the users, such that each user occupies one subband. The assignment follows a periodic block-hopping pattern with period $n_{\rm hop}$, as specified, e.g., by the 3GPP SRS configuration or the MCC-pilot in~\cite{3gpp_ts38211_v1850,MCC}. For a given user, let $\mathcal B_t$ denote the resulting active subcarrier set at the $t$-th sounding slot, where consecutive sounding slots are separated by $\delta t$.

The base station (BS) employs a uniform planar array (UPA) with $N_{\mathrm r}=N_x\times N_y\times N_p$ receive antennas, where $N_x$, $N_y$, and $N_p$ denote the horizontal, vertical, and polarization dimensions, respectively. Each user equipment (UE) is assumed to have a single transmit antenna. At the $t$-th sounding slot, the BS receives the SRS over the subband $\mathcal B_t$ and the resulting frequency--spatial observation $Y^{(t)}\in\mathbb C^{M_{\mathrm f}\times N_{\mathrm r}}$ satisfies
\begin{equation}
    Y^{(t)}
    =
    S^{(t)}\varPhi^{(t)}
    \xi(\ve^{(t)},\tau^{(t)})
    \tilde H^{(t)}
    +W^{(t)},
    \label{eq:obs}
\end{equation}
where $\tilde H^{(t)}\in\mathbb C^{N_{\mathrm f}\times N_{\mathrm r}}$ is the full-band frequency--spatial (FS) channel response, $\varPhi^{(t)}\in\{0,1\}^{M_{\mathrm f}\times N_{\mathrm f}}$ is the selection matrix extracting the rows of $\tilde H^{(t)}$ indexed by $\mathcal B_t$, and $W^{(t)}\in\mathbb C^{M_{\mathrm f}\times N_{\mathrm r}}$ is additive white Gaussian noise. The diagonal matrix $S^{(t)}$ contains the sounding sequence and satisfies $(S^{(t)})^HS^{(t)}=I$. The matrix $\xi(\ve^{(t)},\tau^{(t)})\in\mathbb C^{N_{\rm f}\times N_{\rm f}}$ models the residual synchronization perturbations, where $\ve^{(t)}$ and $\tau^{(t)}$ denote the phase offset and timing error, respectively. Specifically, $\xi(\ve,\tau)$ is diagonal, with its $n$-th diagonal element given by
$\xi_{nn}(\ve,\tau):=e^{\mathrm i\ve}e^{-2\pi\mathrm i\cdot n\tau/N_{\rm f}}$.
The residual phase and timing offsets are modeled incrementally~\cite{803501,salim2014channel} as:
\begin{equation}
\begin{aligned}
    \varepsilon^{(t)}
    &=
    \varepsilon^{(t-1)}+\dep^{(t)},
    \qquad
    \dep^{(t)}\in(-\pi,\pi],\\
    \tau^{(t)}
    &=
    \tau^{(t-1)}+\dtau^{(t)},
    \qquad
    \dtau^{(t)}\in[-\tau_0,\tau_0].
    \label{eq:NonIdeal}
\end{aligned}
\end{equation}
Without loss of generality, we set $S^{(t)}=I$ throughout this paper. To simplify the formulation of the DDA-sparse recovery model, we further assume $\xi(\ve^{(t)},\tau^{(t)})=I$ until Sec.~\ref{sub:streamline_for_online_processing}.

\subsection{Channel Model: Sparsity in the Virtual DDA-Domain}\label{sub:multipath_channel_model}

% We adopt the following sparse multipath channel model. At $t$-th sounding slot, the time--frequency--spatial (TFS) domain channel response $\tilde H^{(t)}$ at the $m$-th subcarrier and the $(i,j)$-th antenna can be decomposed into several paths in the virtual DDA domain such that the location of each path is determined by the delay $\tau_l^{(t)}$, the elevation angle of arrival (AoA) $\phi_x^{(t)}$ and the azimuth AoA $\phi_y^{(t)}$ in the $t$-th sounding slot. For the $l$-th path $(l\in\{1,2,\ldots,L\})$, we have
We adopt a sparse multipath channel model in the virtual DDA domain. At the $t$-th sounding slot, the frequency--spatial channel coefficient at frequency $f$ and antenna index $(i,j)$ is represented as a superposition of $L$ propagation paths. Each path is characterized by a delay, two angular coordinates, a Doppler-induced phase, and a complex gain. Specifically,
\begin{equation}
\begin{aligned}
    \tilde{H}_{f,i,j}^{(t)}=\sum_{l=1}^L &X^{(t)}_l e^{-2\pi\ii f\tau_l^{(t)}} e^{\ii\vp^{(t)}_l}  e^{-2\pi\ii  i\omega_{x,l}^{(t)}} e^{ -2\pi\ii  j  \omega_{y,l}^{(t)}},\label{eq:Model}
\end{aligned}
\end{equation}
where $\omega_{x,l}^{(t)}:=\frac{1}{2}\cos(\phi_{l}^{(t)})\cos(\psi_{l}^{(t)})$ and $\omega_{y,l}^{(t)}:=\frac{1}{2}\cos(\phi_{l}^{(t)})\sin(\psi_{l}^{(t)})$ are the angular-domain coordinates under half-wavelength antenna spacing \cite{booktse, bookUPA}. Here, $\phi_l^{(t)}$ and $\psi_l^{(t)}$ denote the elevation and azimuth angles, respectively.
For each path $l$, $\varphi_l^{(t)}$ denotes the accumulated Doppler-induced phase, whose slot-to-slot increment is determined by the Doppler shift $\nu_l^{(t)}$ according to $\varphi_l^{(t)}=\varphi_l^{(t-1)}+2\pi\delta t\,\nu_l^{(t)}$, where $\delta t$ denotes the sounding interval in seconds.
% For each path $l$, the Doppler shift $\vp^{(t)}_l$ interprets the channel's varying speed and is intrinsically controlled by the Doppler incremental $\nu_l^{(t)}$ satisfying $\varphi_l^{(t)}=\varphi_l^{(t-1)}+2\pi \delta t\cdot \nu_{l}^{(t)}$, where $\delta t$ indicates the sounding interval in seconds. 

\section{Multi-Slot Joint Channel Recovery over a Local Fixed Observation Window}

\subsection{Multi-Slot Joint Channel Recovery Model}\label{sub:Lasso_model}

We first formulate the sparse-recovery problem for the latest slot $t_0$ over a fixed observation window $\Omega_{t_0}:=\{t_0-T+1,\ldots,t_0\}$. As a simplified starting point, we assume that the effective channel variation is sufficiently slow such that the DDA parameters $\nu_l^{(t)}$, $\tau_l^{(t)}$, $\omega_{x,l}^{(t)}$, and $\omega_{y,l}^{(t)}$, together with the channel gain $X_l^{(t)}$, remain approximately constant for all $t\in\Omega_{t_0}$. Given the discretized DDA grid $\mathcal D_{t_0}:=(\mathcal T_{t_0},\varTheta_{t_0},\mathcal V_{t_0})$ for $\Omega_{t_0}$, the multipath channel model in Sec.~\ref{sub:multipath_channel_model} yields the following \textit{DDA channel estimation model} in LASSO formulation:
\begin{equation}
\begin{aligned}
    \min_{X,H}& \sum_{t\in\Omega_{t_0}}\frac1{2\sigma_t}\big\|Y_t-A_tH_{t}F_{\mathrm r}^\top\big\|_{\rm F}^2+\lambda\sum_{(\tau,\theta,\nu)\in\mathcal D_{t_0}}|X_{\tau,\theta,\nu}|,\\ 
    \text{s.t.}&\ H_{\tau,\theta,t}=\sum_{\nu\in \mathcal V_{t_0}}(F_{\mathrm d})_{t,\nu}X_{\tau,\theta,\nu}, \ \big(\forall \tau\in\mathcal T_{t_0}, \theta\in\varTheta_{t_0}\big).\label{eq:SingleWindowModel}
\end{aligned}
\end{equation}
Here, $H\in\mathbb C^{|\mathcal T_{t_0}|\times|\varTheta_{t_0}|\times T}$ denotes the delay--angular--time-domain channel representation, and $H_t\in\mathbb C^{|\mathcal T_{t_0}|\times|\varTheta_{t_0}|}$ is its slice at slot $t\in\Omega_{t_0}$, where the notation $|\mathcal{A}|$ denotes the cardinality of set $\mathcal{A}$. Moreover, $X\in\mathbb C^{|\mathcal T_{t_0}|\times|\varTheta_{t_0}|\times|\mathcal V_{t_0}|}$ denotes the delay--angular--Doppler-domain representation, and $F_{\mathrm d}\in\mathbb C^{T\times |\mathcal V_{t_0}|}$, $A_t\in \mathbb C^{M_{\mathrm f}\times|\mathcal T_{t_0}|}$, $F_{\mathrm r}\in\mathbb C^{N_{\mathrm r}\times|\varTheta_{t_0}|}$ are the time--Doppler, frequency--delay, and spatial--angular domain transformation matrices, respectively, to be precisely defined below. The introduction of the separate variables $H$ and $X$ facilitates the operator-splitting solver developed in Sec.~\ref{sec:ASM_solver}. Once $H_{t_0}$ is obtained, the reconstructed latest-slot full-band channel is given by $\tilde H^{(t_0)}=F_{\mathrm f}H_{t_0}F_{\mathrm r}^{\top}$, where $F_{\mathrm f}\in\mathbb C^{N_{\rm f}\times|\mathcal T_{t_0}|}$ is the frequency--delay super-resolution matrix.

Both super-resolution matrices $F_{\mathrm d}, F_{\mathrm f}$ are row-orthonormal partial discrete Fourier transform (DFT) matrices. The spatial--angular transformation matrix $F_{\mathrm r}$ is structured as $F_{\mathrm r}=\mathrm{diag}(\tilde F_{\mathrm r},\tilde F_{\mathrm r})$ for dual-polarized reception, where $\tilde F_{\mathrm r}$ is the single-polarization super-resolution matrix formed as the Kronecker product of the horizontal and vertical angular matrices, i.e., $\tilde F_{\mathrm r}=F_y\otimes F_x$. The frequency--delay sensing matrix $A_t$ has the form $A_t=S^{(t)}\varPhi^{(t)}\xi(\ve^{(t)},\tau^{(t)})F_{\mathrm f}$.

\subsection{Group Sparse Representation in the Doppler Domain}\label{sub:group_sparse_representation}

In practice, although the location parameters  typically vary only slightly over a moderate observation window~\cite{booktse,3GPP38901}, the path gain $X_l^{(t)}$ may fluctuate more rapidly, making the static-gain assumption restrictive. Nevertheless, we show that, even when $X_l^{(t)}$ and $\nu_l^{(t)}$ vary with time, the resulting Doppler-domain representation remains energy-concentrated and therefore admits a sparse approximation.
For a given path $l$, we assume that $\tau_l^{(t)}$, $\omega_{x,l}^{(t)}$, and $\omega_{y,l}^{(t)}$ remain constant for $t\in\Omega_{t_0}$. With a slight abuse of notation, we fix the path $l$, suppress the path index, write $x(t):=X_l^{(t)}$ and $\vp(t):=\vp_l^{(t)}$, and use a continuous-time notation over $t\in[0,T_{\rm w}]$ for this analysis. If we denote $y(t):=x(t)e^{\mathrm i\vp(t)}$ as the delay--angular--time-domain coefficient associated with the $l$-th path, then we have
\begin{prop}\label{prop:smooth_Doppler}
Suppose that $x(t)$ is complex-valued and continuously differentiable on $[0,T_{\rm w}]$, and that the phase $\vp(t)$ is twice continuously differentiable on $[0,T_{\rm w}]$, with $|x(t)|\leq c_0$, $|x'(t)|/c_0\leq r_1$, $\nu_{\min}\leq \vp'(t)/(2\pi)\leq \nu_{\max}$, and $|\vp''(t)|\leq r_2$ for $t\in[0,T_{\rm w}]$. Then, for any $\Delta \nu>0$ and $\nu\in\mathcal I_{\Delta \nu}^c$, where $\mathcal I_{\Delta \nu}:=\{\nu:\mathrm{dist}(\nu,[\nu_{\min}, \nu_{\max}])\leq\Delta \nu\}$, we have $\tilde x(\nu):=\frac1{T_{\rm w}}\int_{0}^{T_{\rm w}}y(t)e^{-2\mathrm i\pi \nu t}\mathrm dt$ satisfies:
\begin{equation}
|\tilde x(\nu)|\leq c_0\left(\frac{1}{T_{\rm w}\pi\Delta \nu} +\frac{r_1}{2\pi\Delta \nu}
+\frac{r_2}{(2\pi\Delta \nu)^2}\right).\label{eq:prop1}
\end{equation}
\end{prop}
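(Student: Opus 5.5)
The bound follows from a single integration by parts, the standard non-stationary phase argument. Write the integrand as $x(t)e^{\ii\Psi(t)}$ with total phase $\Psi(t):=\vp(t)-2\pi\nu t$. Its derivative is $\Psi'(t)=\vp'(t)-2\pi\nu$. The assumption $\nu\in\mathcal I_{\Delta\nu}^c$ together with $\nu_{\min}\le\vp'(t)/(2\pi)\le\nu_{\max}$ gives the uniform lower bound
$$|\Psi'(t)|\ge 2\pi\,\mathrm{dist}\bigl(\nu,[\nu_{\min},\nu_{\max}]\bigr)>2\pi\Delta\nu \quad\text{for all } t\in[0,T_{\rm w}].$$
In particular $\Psi'$ does not vanish and keeps a constant sign, since it is continuous.

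We then write $e^{\ii\Psi}=\frac{1}{\ii\Psi'}\frac{\mathrm d}{\mathrm dt}e^{\ii\Psi}$ and integrate by parts once, which is legitimate because $x\in C^1$ and $\Psi'\in C^1$ by the $C^2$ assumption on $\vp$. This yields
$$T_{\rm w}\,\tilde x(\nu)=\Bigl[\frac{x(t)e^{\ii\Psi(t)}}{\ii\Psi'(t)}\Bigr]_0^{T_{\rm w}}-\int_0^{T_{\rm w}}e^{\ii\Psi(t)}\,\frac{\mathrm d}{\mathrm dt}\Bigl(\frac{x(t)}{\ii\Psi'(t)}\Bigr)\mathrm dt .$$
Each of the three resulting terms is bounded separately, and the three bounds match the three terms of \eqref{eq:prop1}.

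\begin{itemize}
\item \emph{Boundary term.} Each endpoint contributes at most $c_0/(2\pi\Delta\nu)$, so the term is at most $2c_0/(2\pi\Delta\nu)=c_0/(\pi\Delta\nu)$. Dividing by $T_{\rm w}$ gives $c_0/(T_{\rm w}\pi\Delta\nu)$.
\item \emph{Derivative of the amplitude.} The term $x'/(\ii\Psi')$ is bounded by $c_0r_1/(2\pi\Delta\nu)$. Integrating over $[0,T_{\rm w}]$ and dividing by $T_{\rm w}$ gives $c_0 r_1/(2\pi\Delta\nu)$.
\item \emph{Derivative of the phase.} The term $-x\Psi''/(\ii\Psi'^2)$, with $\Psi''=\vp''$, is bounded by $c_0 r_2/(2\pi\Delta\nu)^2$. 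Averaging over the window gives $c_0r_2/(2\pi\Delta\nu)^2$.
\end{itemize}
Adding the three contributions gives exactly \eqref{eq:prop1}.

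The only delicate step is the lower bound on $|\Psi'|$. One must check that $\nu\notin\mathcal I_{\Delta\nu}$ forces the distance from $\nu$ to every value $\vp'(t)/(2\pi)$ to exceed $\Delta\nu$. This holds because all these values lie in $[\nu_{\min},\nu_{\max}]$. Everything else is routine bookkeeping.

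If a sharper constant were wanted in the third term, one could use the constant sign of $\Psi'$ and write $\int|\Psi''|/\Psi'^2$ as the total variation of $1/\Psi'$. The stated bound with $r_2$ does not need this and follows from the crude pointwise estimate above.
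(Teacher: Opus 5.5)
Your proposal is correct and follows essentially the same route as the paper's proof. Both integrate by parts once against the non-stationary phase $\vp(t)-2\pi\nu t$, use the lower bound $|\vp'(t)-2\pi\nu|\ge 2\pi\Delta\nu$, and bound the boundary term and the two pieces of $\frac{\mathrm d}{\mathrm dt}\bigl[x/(\ii\Psi')\bigr]$ pointwise to obtain exactly the three terms of \eqref{eq:prop1}.
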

\begin{IEEEproof}
See Appendix.~\ref{app:prop_smooth_doppler}.
\end{IEEEproof}

Here $\tilde x(\nu)$ is the Doppler representation under the rectangular observation window. The interval $[\nu_{\min},\nu_{\max}]$ describes the Doppler range swept over $t\in[0,T_{\rm w}]$, outside which the Doppler-domain coefficient $\tilde x(\nu)$ decays according to \eqref{eq:prop1}. The first term in~\eqref{eq:prop1} is induced by the rectangular observation window and represents the baseline finite-window spectral leakage. To characterize the structured sparsity induced by channel variation, the latter two terms are more relevant, as they quantify the additional Doppler spreading caused by the variations of the coefficient $x(t)$ and phase $\vp(t)$, respectively.

When $r_1$ and $r_2$ are small, this additional spreading remains limited, and the Doppler-domain representation stays concentrated around the swept Doppler interval. For a slowly moving user, the instantaneous Doppler shift $\vp'(t)/(2\pi)$ typically varies only mildly over a short window. Although $|\nu_{\max}-\nu_{\min}|$ and $r_2$ may increase for a longer observation window, the resulting representation can still exhibit an expanded yet concentrated Doppler support.

\section{Online DDA-Domain Recovery for Dynamic Channel Tracking}\label{sec:Online_Tracking_Model}

\subsection{Motivation: Local Stability Across Consecutive Recovery Problems}\label{sub:motivation_OLO}

In dynamic tracking, adjacent windows share the observations on $\Omega_{t_0-1}\cap\Omega_{t_0}$. This substantial overlap motivates an online optimization strategy that warm-starts the $\Omega_{t_0}$-based recovery problem with the sparse estimate obtained from $\Omega_{t_0-1}$.
To formalize this intuition without introducing the full channel notation, we consider an illustrative Doppler-domain subproblem for a fixed path with invariant delay--angular parameters, as motivated by Sec.~\ref{sub:group_sparse_representation}.

For a sequence $\{y_t\}_{t=1}^{T+1}$, define $\ell_t(u):=\frac12|y_t-(F_{\mathrm d})_tu|^2$ and $\mathcal U(\Omega):=\arg\min_u\sum_{t\in\Omega}\ell_t(u)+\lambda\|u\|_1$ for $u\in\mathbb C^{N\times 1}$. Let $\Omega_-:=\{1,\ldots,T\}$, $\Omega_+:=\{2,\ldots,T+1\}$, and let $\Omega_0:=\{1,\ldots,T+1\}$ denote the parent window. Denote the corresponding minimizers by $\hat x_-:=\mathcal U(\Omega_-)$, $\hat x_+:=\mathcal U(\Omega_+)$, and $\hat x_0:=\mathcal U(\Omega_0)$. Then the following result shows that the two adjacent-window solutions are close when the parent-window problem admits a stable optimizer $\hat x_0$:

\begin{prop}\label{prop:parent_child_error}
Suppose that there exists a set $\mathcal S\subseteq[N]$ such that $\operatorname{supp}(\hat x_0)\subseteq \mathcal S$. Define the parent KKT certificate $z_0:=-\lambda^{-1}\sum_{t=1}^{T+1}\nabla \ell_t(\hat x_0)$, the dual margin $\gamma:=1-\|z_0|_{{\mathcal S}^c}\|_\infty$, and the endpoint perturbations $\delta_-:=\|\nabla \ell_{T+1}(\hat x_0)\|_{\infty}$ and $\delta_+:=\|\nabla \ell_{1}(\hat x_0)\|_{\infty}$. If $\gamma\in(0,1)$, $\delta_-,\delta_+\leq\lambda\gamma/2$, and there exists $\kappa_c>0$ such that $\sum_{t=2}^T|(F_{\mathrm d})_th|^2\geq \kappa_c \|h\|^2_2$ holds for any $h$ satisfying $\|h|_{{\mathcal S}^c}\|_1\leq \|h|_{\mathcal S}\|_1$, then we have
\begin{equation}
    \|\hat x_+-\hat x_-\|_2\leq  4(\delta_++\delta_-)\kappa_c^{-1}\sqrt{|\mathcal S |}.\label{eq:parent_bound}
\end{equation}
\end{prop}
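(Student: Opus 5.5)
The plan is to compare each child solution with the parent solution $\hat x_0$ separately and then combine the two estimates by the triangle inequality:
\[
\|\hat x_+-\hat x_-\|_2\le\|\hat x_+-\hat x_0\|_2+\|\hat x_--\hat x_0\|_2 .
\]
It therefore suffices to prove that $\|\hat x_\pm-\hat x_0\|_2\le 2\delta_\pm\kappa_c^{-1}\sqrt{|\mathcal S|}$, up to a factor of $2$ that may be lost through the complex Wirtinger-gradient convention. Either way this gives \eqref{eq:parent_bound}. The two cases are symmetric, so I describe only $\Omega_-$, for which the dropped endpoint is $t=T+1$.

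\textbf{Step 1: the child gradient at $\hat x_0$.} Write $f_-(u):=\sum_{t=1}^{T}\ell_t(u)$ and $h:=\hat x_--\hat x_0$. The parent KKT condition says $z_0\in\partial\|\hat x_0\|_1$. The child gradient at the parent point is then
\[
\nabla f_-(\hat x_0)=-\lambda z_0-g,\qquad g:=\nabla\ell_{T+1}(\hat x_0),\qquad \|g\|_\infty=\delta_- .
\]
Because $f_-$ is quadratic,
\[
f_-(\hat x_0+h)-f_-(\hat x_0)=\mathrm{Re}\langle\nabla f_-(\hat x_0),h\rangle+\tfrac12 Q(h),\qquad Q(h):=\sum_{t=1}^{T}|(F_{\mathrm d})_th|^2 .
\]

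\textbf{Step 2: basic inequality from child optimality.} Optimality of $\hat x_-$ gives $f_-(\hat x_-)+\lambda\|\hat x_-\|_1\le f_-(\hat x_0)+\lambda\|\hat x_0\|_1$. Next I lower-bound the change in the $\ell_1$ norm. Since $\operatorname{supp}(\hat x_0)\subseteq\mathcal S$, on $\mathcal S$ the vector $z_0$ is a valid subgradient, with $|z_{0,i}|\le1$. On $\mathcal S^c$ the iterate $\hat x_0$ vanishes and $|z_{0,i}|\le 1-\gamma$. Together these give
\[
\|\hat x_0+h\|_1-\|\hat x_0\|_1-\mathrm{Re}\langle z_0,h\rangle\ \ge\ \gamma\|h|_{\mathcal S^c}\|_1 .
\]
Substituting this into the basic inequality, the $\lambda z_0$ terms cancel, and Hölder's inequality bounds the remaining gradient term. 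The result is
\[
\tfrac12 Q(h)\le \delta_-\|h\|_1-\lambda\gamma\|h|_{\mathcal S^c}\|_1\le \delta_-\|h|_{\mathcal S}\|_1-\delta_-\|h|_{\mathcal S^c}\|_1 ,
\]
where the second inequality uses $\lambda\gamma\ge2\delta_-$.

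\textbf{Step 3: cone condition and conclusion.} Since $Q(h)\ge0$, Step 2 forces $\|h|_{\mathcal S^c}\|_1\le\|h|_{\mathcal S}\|_1$. The restricted-curvature hypothesis therefore applies, and since $Q(h)\ge\sum_{t=2}^T|(F_{\mathrm d})_th|^2$ we get $Q(h)\ge\kappa_c\|h\|_2^2$. Combining this with $\|h|_{\mathcal S}\|_1\le\sqrt{|\mathcal S|}\,\|h\|_2$ yields
\[
\kappa_c\|h\|_2^2\le2\delta_-\sqrt{|\mathcal S|}\,\|h\|_2 ,
\]
which is the claimed bound on $\|\hat x_--\hat x_0\|_2$. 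Repeating the argument for $\Omega_+$, with the dropped endpoint $t=1$ and $\delta_+$ in place of $\delta_-$, and adding the two bounds finishes the proof.

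\textbf{Main obstacle.} The delicate step is Step 2. One must keep the complex inner product and subgradient conventions consistent, so that the sign vector $z_0$ cancels exactly and the margin $\gamma$ on $\mathcal S^c$ turns into the cone condition. The key point is that $\hat x_0$ may be zero on parts of $\mathcal S$, which is why $\mathcal S$, rather than the exact support of $\hat x_0$, is the right set. The constant $4$ in \eqref{eq:parent_bound} leaves slack for any factor $2$ arising from gradient normalization.
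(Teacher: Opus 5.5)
Your proposal is correct and follows essentially the same route as the paper's proof: the dual-margin lower bound $\gamma\|h|_{\mathcal S^c}\|_1$ on the $\ell_1$-Bregman divergence at $\hat x_0$, then the basic inequality from child optimality combined with the parent KKT identity, the resulting cone condition, restricted curvature on the shared slots $t=2,\ldots,T$, and finally the triangle inequality through $\hat x_0$. The differences are minor: because you keep the $-\delta_-\|h|_{\mathcal S^c}\|_1$ term rather than using $\|h\|_1\le 2\|h|_{\mathcal S}\|_1$, you get the sharper constant $2$ instead of $4$; your Wirtinger hedge is unnecessary, since requiring $z_0$ to be a KKT certificate already fixes the real-gradient convention, so no factor of $2$ is lost; and when $\delta_-=0$ the cone condition must be read off from $\tfrac12 Q(h)\le-\lambda\gamma\|h|_{\mathcal S^c}\|_1$, not from your final display.
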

\begin{IEEEproof}
See Appendix.~\ref{app:prop_parent_child_error}.
\end{IEEEproof}

As investigated in Sec.~\ref{sub:group_sparse_representation}, when the Doppler-domain representation over a multi-slot window is generated by the time-varying ground truth $X_l^{(t)}$, its sparse-representation stability may deteriorate as the window size increases. Proposition~\ref{prop:parent_child_error} shows that, if the one-slot expansion to the parent-window problem still yields an optimizer $\hat x_0$ with a stable sparse screen, namely with small endpoint residual correlations $\delta_\pm$ and sufficient restricted curvature $\kappa_c$ over the shared observations $\Omega_-\cap\Omega_+$, then the two adjacent-window problems admit similar optimizers. Thus, the next-window problem can inherit an intermediate iterate from the previous window.

\subsection{The ST-DDA Architecture for Online Processing}\label{sub:streamline_for_online_processing}

\begin{figure}[t]
\centering
\includegraphics[width=0.9\columnwidth]{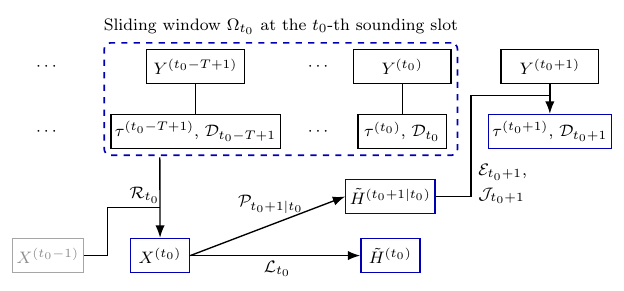}
\caption{Online ST-DDA processing over a sliding window. At the $t_0$-th sounding slot, the preceding-window estimate $X^{(t_0-1)}$ warm-starts the DDA-domain reconstruction $\mathcal R_{t_0}$ to obtain $X^{(t_0)}$, which is mapped by $\mathcal L_{t_0}$ to the latest-slot full-band channel $\tilde H^{(t_0)}$ and by $\mathcal P_{t_0+1|t_0}$ to the next-slot prediction $\tilde H^{(t_0+1|t_0)}$. Upon receiving $Y^{(t_0+1)}$, the prediction is compared with the new observation to estimate the residual phase and timing offsets $\varepsilon^{(t_0+1)}$ and $\tau^{(t_0+1)}$ through $\mathcal E_{t_0+1}$ and determine the estimation space $\mathcal D_{t_0+1}$ through $\mathcal J_{t_0+1}$, thereby enabling the next-window recovery.}
\label{fig:stream1}
\end{figure}

We now instantiate this cross-window inheritance principle in the online ST-DDA pipeline. At the latest sounding slot $t_0$, once the estimation space $\mathcal D_{t_0}$ has been determined and the residual phase and timing offsets $\{\tau^{(t)},\varepsilon^{(t)}\}_{t\in\Omega_{t_0}}$ have been estimated, the DDA-domain channel can be recovered by applying the joint model~\eqref{eq:SingleWindowModel} over $\Omega_{t_0}$.
As shown in Fig.~\ref{fig:stream1}, we denote this process of reconstructing the DDA-domain channel $X^{(t_0)}$ from the $\Omega_{t_0}$-windowed observation as follows
\begin{equation}
    X^{(t_0)}=\mathcal R_{t_0}\bigl(\{Y^{(t)}\}_{t\in\Omega_{t_0}},\{(\tau^{(t)},\varepsilon^{(t)})\}_{t\in\Omega_{t_0}},{\mathcal D}_{t_0}\bigr),\label{eq:DDASolver-t0}
\end{equation}
where $\mathcal R_{t_0}$ denotes any solver for \textit{DDA model}~\eqref{eq:SingleWindowModel}, such as the ASM solver introduced later in Sec.~\ref{sec:ASM_solver}.
As discussed in Sec.~\ref{sub:motivation_OLO}, $\mathcal R_{t_0}$ is not required to solve each window-specific problem to full convergence. Instead, it is warm-started by $X^{(t_0-1)}$, thereby carrying the optimization progress across sounding slots. From the resulting DDA-domain representation $X^{(t_0)}$, we obtain both the latest-slot full-band channel $\tilde H^{(t_0)}$ and the one-slot-ahead prediction $\tilde H^{(t_0+1|t_0)}$.

% For deriving the TFS-domain $\tilde H^{(t_0)}$, one can simply apply the tri-linear transform $\mathcal L_{t_0}$ induced by $F_{\mathrm d},F_{\mathrm f}$ and $F_{\mathrm r}$. As for $\tilde H^{(t_0+1|t_0)}$, we apply a similar tri-linear transform $\mathcal P_{t_0+1|t_0}$ which is different from $\mathcal L_{t_0}$ solely by substituting the time--Doppler transformation with $(F_{\mathrm d})_{t_0+1}$. As will be demonstrated in Sec.~\ref{sub:estimation_space} and Sec.~\ref{sub:non_ideal_MLE}, the synchronization of $\tau^{(t_0+1)},\varepsilon^{(t_0+1)}$ and the determination of $\mathcal D_{t_0+1}$ can be achieved under the assistance of $\tilde H^{(t_0+1|t_0)}$ by comparing it with the newly-coming $Y^{(t_0+1)}$. In Fig.~\ref{fig:stream1}, we denote the synchronization and estimation space determination operation as $\mathcal E_{t_0+1}$ and $\mathcal J_{t_0+1}$, respectively, and this closes the tracking process under window $\Omega_{t_0}$. 

The latest-slot full-band channel $\tilde H^{(t_0)}$ is obtained by applying the trilinear transform $\mathcal L_{t_0}$ induced by $F_{\mathrm d}$, $F_{\mathrm f}$, and $F_{\mathrm r}$. The prediction $\tilde H^{(t_0+1|t_0)}$ is obtained through an analogous transform $\mathcal P_{t_0+1|t_0}$, in which the time--Doppler factor is replaced by $(F_{\mathrm d})_{t_0+1}$. As will be demonstrated in Secs.~\ref{sub:estimation_space} and~\ref{sub:non_ideal_MLE}, the predicted channel is going to be compared with the newly received observation $Y^{(t_0+1)}$ to synchronize $\tau^{(t_0+1)}$ and $\varepsilon^{(t_0+1)}$ and to construct $\mathcal D_{t_0+1}$. These two operations are denoted by $\mathcal E_{t_0+1}$ and $\mathcal J_{t_0+1}$, respectively, in Fig.~\ref{fig:stream1}, thereby completing the processing associated with window $\Omega_{t_0}$.
Because a common DDA grid is used within each window, all observations in $\Omega_{t_0}$ are represented over the latest estimation space $\mathcal D_{t_0}$. During the warm-up stage $t_0<T$, we retain the same processing pipeline but use the truncated window $\Omega'_{t_0}:=\{1,\ldots,t_0\}$.

\section{Alternating Subspace Method for the Structured Sparse Recovery in the DDA Domain}\label{sec:ASM_solver}

\subsection{The Alternating Subspace Iteration Framework}\label{sec:ASM_Iteration}

Although the online architecture above avoids solving each windowed problem to full convergence, its practical efficiency still depends on the convergence speed under a fixed per-slot iteration budget, which directly affects the steady-state accuracy. The single-window DDA model~\eqref{eq:SingleWindowModel} can be viewed as a tensorized LASSO problem. 
Rather than vectorizing the problem and applying an unstructured solver, we preserve the separate variables $H$ and $X$ and adopt an operator-splitting formulation~\cite{LSCOA} under the constraint $H=F_{\mathrm d}X$. Following ASM~\cite{ASM}, an ADMM-type framework~\cite{ADMM:Boyd} tailored to sparse recovery through support-induced subspace updates, the $k$-th iteration for \textit{DDA model}~\eqref{eq:SingleWindowModel} is given by
\begin{align}
&X^{k,1}_{\tau,\theta}=\mathop{\arg\min}_{X_{\tau,\theta}} \mathcal G_\lambda(X_{\tau,\theta}; H^k_{\tau,\theta}+v\Pi^k_{\tau,\theta}), (\forall \tau,\theta),\label{line:ST}\\
    &H_{\tau,\theta}^{k,0}=F_{\mathrm d}X_{\tau,\theta}^{k,1}, (\forall \tau,\theta), \\
    &H_t^{k,1}|_{E^k}=\mathop{\arg\min}_{H_t|_{E^k}}\mathcal F_t(H_t|_{E^k};[H_t^{k,0}-v\Pi^k_t]|_{E^k}), (\forall t),\label{line:LMMSE1}\\
    &\Pi^{k,1}_t=\frac1{\sigma_t} A_t^H(Y_t- A_t H_t^{k,1} F_{\mathrm r}^\top )\bar{F_{\mathrm r}},\ (\forall t),\label{line:GD2}\\
    &\boldsymbol \Lambda^{k+1}=(1-d)\boldsymbol \Lambda^{k} + d\cdot \boldsymbol \Lambda^{k,1}, \label{line:AVE}
\end{align}
where $\boldsymbol\Lambda^k:=(X^k,H^k,\Pi^k)$ is the product-space iterate and $\boldsymbol\Lambda^{k,1}:=(X^{k,1},H^{k,1},\Pi^{k,1})$ collects the intermediate updates. Here, $v>0$ is the step size, $d\in(0,1)$ is the averaging factor, $t\in\Omega_{t_0}$, $\tau\in\mathcal T_{t_0}$, and $\theta\in\varTheta_{t_0}$. After $k_{\max}$ iterations, the slot estimate is set as $X^{(t_0)}:=X^{k_{\max}}$.

For inputs $Z\in\mathbb C^{|E^k_{\mathcal T_{t_0}}|\times|E^k_{\varTheta_{t_0}}|}$ and $Z'\in\mathbb C^{T\times1}$, the two subproblem objectives are
\begin{align}
    &\mathcal F_t(\hat H_t;Z):=\frac v{\sigma_t}\|Y_t-\hat A_t^k \hat H_t (\hat F_{\mathrm r}^k)^\top\|_{\rm F}^2+\|\hat H_t-Z\|_{\rm F}^2,\label{eq:RLS}\\
    &\mathcal G_\lambda(X_{\tau,\theta};Z'):=\frac1{2v}\|Z'-F_{\mathrm d}X_{\tau,\theta}\|_{\rm F}^2+\lambda\|X_{\tau,\theta}\|_1.\label{eq:CS-DS}
\end{align}
Here, $\hat H_t:=H_t|_{E^k}$ is optimized over the delay--angular subspace $E^k:=E^k_{\mathcal T_{t_0}}\times E^k_{\varTheta_{t_0}}$, which is determined from the support of $X^{k,1}$ induced by the $\mathcal G_\lambda$-subproblem; its construction is detailed in Sec.~\ref{sub:f_t_subproblem}. Correspondingly, $\hat A_t^k$ and $\hat F_{\mathrm r}^k$ are the column-restricted submatrices of $A_t$ and $F_{\mathrm r}$ associated with $E^k$. The complementary entries are left unchanged as $H_t^{k,1}|_{(E^k)^c}:=H_t^{k,0}|_{(E^k)^c}$, which remain zero by construction.

The subspace restriction in~\eqref{line:LMMSE1} and the averaging step~\eqref{line:AVE} follow the core principle of ASM~\cite{ASM}: the expensive data-fidelity update in sparse recovery is restricted to a support-induced subspace for efficiency, while a full-space gradient correction preserves residual feedback outside the current subspace and averaging stabilizes the iteration against imperfect support identification. In the present formulation, choosing $E^k$ as the full delay--angular estimation space $\mathcal T_{t_0}\times\varTheta_{t_0}$ and setting $d=1$ recover the corresponding full-space ADMM iteration under the splitting $H=F_{\mathrm d}X$. We instead restrict the $\mathcal F_t$-subproblem to $E^k$, retain the full-space correction~\eqref{line:GD2}, and adopt $d\in(0,1)$ for stabilization.

In practice, neither subproblem needs to be solved to high precision at every iteration. The employed updates for the $\mathcal G_\lambda$- and $\mathcal F_t$-subproblems are introduced in Secs.~\ref{sub:g_lambda_subproblem} and~\ref{sub:f_t_subproblem}, respectively. To prepare for the convolutional reweighting in Sec.~\ref{sec:conv_reweigh}, we further generalize the scalar $\lambda$ to position- and iteration-dependent weights $\lambda_\zeta^k$, where $\zeta:=(\tau,\theta,\nu)\in\mathcal D_{t_0}$. These weights are treated as given throughout this section, with their construction deferred to Sec.~\ref{sec:conv_reweigh}.

\subsection{Doppler Solver for the $\mathcal G_\lambda$-Subproblem}\label{sub:g_lambda_subproblem}

For given weights $\{\lambda_{\zeta}^k\}$, the $\mathcal G_\lambda$-subproblem is a LASSO problem arising from the time--Doppler duality and thus requires an iterative solution. A simple choice is to perform one iterative shrinkage-thresholding algorithm (ISTA)~\cite{ISTA1} step within each main-loop iteration. Let $U^k_{\tau,\theta}:=H^k_{\tau,\theta}+v\Pi^k_{\tau,\theta}\in\mathbb C^{N_{\mathrm d}\times 1}$ denote the input to $\mathcal G_\lambda$ in~\eqref{line:ST}, define $\tilde {\boldsymbol{\lambda}}^k_{\tau,\theta}\in\mathbb R^{N_{\rm d}\times 1}$ with entries $\{v\gamma \lambda^k_{\tau,\theta,\nu}\}_{\nu\in\mathcal V_{t_0}}$ and let $X^{k,0}:=X^k, X^{k+1}:=X^{k,1}$. The one-step ISTA update is
\begin{equation}
    X^{k,1}_{\tau,\theta}={\mathrm{ST}}_{\tilde {\boldsymbol{\lambda}}_{\tau,\theta}^{k+1}}[X^{k,0}_{\tau,\theta}+\gamma F_{\mathrm d}^H(U^k_{\tau,\theta}-F_{\mathrm d}X^{k,0}_{\tau,\theta})],\label{eq:sub_ISTA}
\end{equation}
where $\mathrm{ST}_{\boldsymbol \beta}(x):=\mathrm{sgn}(x) \odot \max\{|x| - \boldsymbol \beta, 0\}$ denotes the componentwise soft-thresholding operator. Here, $\gamma$ is the ISTA step size, and $\mathrm{sgn}(x):= x/|x|$ is understood componentwise for nonzero entries, with $\mathrm{sgn}(0):=0$.

For relatively large sounding intervals, the Doppler-domain channel energy may exhibit a more complicated structure, motivating a stronger inner update. Specifically, inserting the following additional data-fidelity step at each $(\tau,\theta)$ grid point before~\eqref{eq:sub_ISTA} yields an ADMM iteration:
\begin{equation}
    X^{k,0}_{\tau,\theta}= (I+\gamma F_{\mathrm d}^HF_{\mathrm d})^{-1}(X^{k}_{\tau,\theta}+\gamma F_{\mathrm d}^H U^k_{\tau,\theta} - \tilde {\boldsymbol \lambda}^{k}_{\tau,\theta} \odot \mathrm{sgn}(X^{k}_{\tau,\theta})).\label{eq:sub_data_fidelity}
\end{equation}
Since the Doppler-grid size $|\mathcal V_{t_0}|$ is small, the matrix inversion in~\eqref{eq:sub_data_fidelity} is inexpensive. Once the inverse is available, however, the associated matrix--vector multiplication must still be repeated over the full delay--angular grid of size $|\mathcal T_{t_0}|\times|\varTheta_{t_0}|$. To reduce this cost, we again impose an ASM-type subspace restriction and selectively apply~\eqref{eq:sub_data_fidelity} to a delay--angular subspace determined by the support of $X^k$.

When $\lambda$ is a fixed scalar, the ordering of the two updates above is less consequential. Once the reweighting of $\{\lambda_{\zeta}^k\}_{\zeta\in\mathcal D_{t_0}}$ is introduced, however, it is preferable to update the weights after the data-fidelity step, as indicated in~\cite{TIPcrL}. Accordingly, the ASM-type update follows
\begin{equation}
    X^k\rightarrow X^{k,0}|_{\tilde E^k}\rightarrow\{\lambda_{\zeta}^{k+1}\}_{\zeta\in\mathcal D_{t_0}}\rightarrow X^{k,1}:=X^{k+1}.\label{eq:order_update}
\end{equation}
Here, $\tilde E^k$ denotes the delay--angular subspace on which $X^{k,0}$ is updated according to~\eqref{eq:sub_data_fidelity}, whereas $X^{k,0}_{\tau,\theta}=X^k_{\tau,\theta}$ is retained for $(\tau,\theta)\in (\tilde E^k)^c$. Although both $\tilde E^k$ and $E^k$, introduced for the $\mathcal F_t$-subproblem in~\eqref{line:LMMSE1}, are delay--angular subspaces, they are constructed differently. As shown in Sec.~\ref{sub:f_t_subproblem}, $E^k:=E_{\mathcal T_{t_0}}^k\times E_{\varTheta_{t_0}}^k$ must be a product space so that the restricted variable $H_t|_{E^k}$ admits a matrix representation. In contrast, the $\mathcal G_\lambda$-subproblem is separable over $(\tau,\theta)$, allowing $\tilde E^k$ to specify a finer subspace.
In implementation, $X^k$ undergoes the averaging step~\eqref{line:AVE} before entering the $\mathcal G_\lambda$-subproblem. Consequently, some previously active $(\tau,\theta)$ nodes may remain numerically nonzero and inflate the support under a strict nonzero criterion. We therefore introduce a small threshold $\epsilon_1$ (e.g., $\epsilon_1=10^{-8}$) and define the practical subspace $\tilde E^k:=\{(\tau,\theta)\in\mathcal T_{t_0}\times\varTheta_{t_0} :\|X^k_{\tau,\theta}\|_\infty\geq \epsilon_1 \}$.

\begin{algorithm}[t]
\caption{ASM Solver for $\mathcal R_{t_0}$ in~\eqref{eq:DDASolver-t0} with Reweighting}
\label{alg:ASM_solver}
\begin{algorithmic}[1]
\REQUIRE $\{Y_t, \sigma_{t}, \varepsilon^{(t)}, \tau^{(t)}, \mathcal B_{t} \}_{t\in\Omega_{t_0}}$, $\mathcal D_{t_0}:=(\mathcal T_{t_0},\varTheta_{t_0},\mathcal V_{t_0})$; Initialize all $\lambda_{\zeta}$ as $\lambda_{t_0}$, $\boldsymbol\Lambda^0=(X^0,H^0,\boldsymbol 0)$.
\ENSURE $X^{(t_0)}$ and $X^0,H^0$ for $(t_0+1)$-th slot
\FOR{$k=0,\ldots,k_{\max}-1$}
    \STATE \% $\mathcal G_\lambda$-\textbf{subproblem} \eqref{line:ST}
      (for all $(\tau,\theta)\in\mathcal T_{t_0}\times\varTheta_{t_0}$)
    \STATE Derive the input $U^k_{\tau,\theta}=H^k_{\tau,\theta}+v\Pi^k_{\tau,\theta}$; 
    \STATE Construct $\tilde E^k:=\{(\tau,\theta)\in\mathcal T_{t_0}\times\varTheta_{t_0} :\|X^k_{\tau,\theta}\|_\infty\geq \epsilon_1 \}$ 
    \STATE Derive $X^{k,0}|_{\tilde E^k}$ from $U^k$ using~\eqref{eq:sub_data_fidelity}.
    \STATE Retain $X^{k,0}|_{(\tilde E^k)^c}=X^k|_{(\tilde E^k)^c}$.
    \STATE Reweight $\lambda_\zeta^{k+1}=\lambda_{t_0}/(|\mathcal N_\zeta|^{-1}\sum_{\zeta'\in\mathcal N_\zeta}|X^{k,0}_{\zeta'}|+\epsilon)$.
    \STATE Derive $X^{k,1}_{\tau,\theta}$ from $X^{k,0}$ via ISTA in~\eqref{eq:sub_ISTA}.
    \STATE Time--Doppler Transition: $H_{\tau,\theta}^{k,0}=F_{\mathrm d}X_{\tau,\theta}^{k,1}$.
    \STATE \% $\mathcal F_t$-\textbf{subproblem} \eqref{line:LMMSE1}
    \STATE Construct
    $E^k=E_{\mathcal T_{t_0}}^k\times E_{\varTheta_{t_0}}^k$
    from the support of $X^{k,1}$
    \FOR{$t\in\Omega_{t_0}\backslash\tilde\Omega^k_{t_0}$}
        \STATE Low-Rank Updates: $(I+\alpha P^k)^{-1}$, $(I+\alpha Q^k)^{-1}$
        \STATE Derive $H_t^{k,1}|_{E^k}$ by~\eqref{eq:syl_scheme} from $H_t^{k,0}|_{E^k}$.
        \STATE Retain $H_t^{k,1}|_{(E^k)^c}=H_t^{k,0}|_{(E^k)^c}$.
        \STATE $\Pi^{k,1}_t=\sigma_t^{-1} A_t^H(Y_t- A_t H_t^{k,1} F_{\mathrm r}^\top )\bar{F_{\mathrm r}}$.
    \ENDFOR
    \STATE Set $(H^{k,1}_t,\Pi^{k,1}_t)=(H^{k}_t,\Pi^{k}_t)$ for $t\in\tilde\Omega^{k}_{t_0}$.
    \STATE Set
    $\boldsymbol\Lambda^{k+1}
    =(1-d)\boldsymbol\Lambda^k+d\boldsymbol\Lambda^{k,1}$ by~\eqref{line:AVE}.
\ENDFOR
\STATE Set $X^{(t_0)}=X^{k_{\max}}$, and synchronize for $(t_0+1)$-th slot: 
\STATE $X^{0}_{\tau,\theta} \leftarrow \sqrt{N_{\rm d}}(F_{\mathrm d})_{2}^\top \odot X^{(t_0)}_{\tau,\theta},\ H^{0}_{\tau,\theta,t_0+1}\leftarrow (F_{\mathrm d})_{T}X^{0}_{\tau,\theta}$
\end{algorithmic}
\end{algorithm}

\subsection{Structured Solution of the $\mathcal F_t$-Subproblem}\label{sub:f_t_subproblem}

The subspace restriction in~\eqref{line:LMMSE1} is introduced to reduce the cost of obtaining the RLS update in~\eqref{eq:RLS}. However, constructing a vectorized observation dictionary for the active delay--angular pairs $(\tau,\theta)$ would destroy the row--column tensor structure of the subproblem. To preserve this structure and avoid solving a large vectorized linear system, we choose the restricted space in product form, i.e., $E^k:=E_{\mathcal T_{t_0}}^k\times E_{\varTheta_{t_0}}^k$, where $E_{\mathcal T_{t_0}}^k:=\{\tau\in\mathcal T_{t_0}: \|X^{k,1}_{\tau}\|_{\rm F}> 0 \}$ and $E_{\varTheta_{t_0}}^k:=\{\theta\in\varTheta_{t_0}: \|X^{k,1}_{\theta}\|_{\rm F}> 0 \}$. Under the product restriction, the $\mathcal F_t$-subproblem has the standard regularized matrix least-squares form, whose normal equation gives the following Sylvester equation~\cite{Benner2009ADI}:
\begin{equation}
    \mathop{\mathrm{Find}}_{\hat H_t}\ \hat H_t+ \frac{v}{\sigma_t} P^k\hat H_t(Q^k)^\top=C_t^k,\ (\forall t)\label{eq:Sylvester}
\end{equation}
where $\hat H_t\in\mathbb C^{|E^k_{\mathcal T_{t_0}}|\times |E^k_{\varTheta_{t_0}}|}$ is optimized over the product space $E^k$, with its rows and columns indexed by $E_{\mathcal T_{t_0}}^k$ and $E_{\varTheta_{t_0}}^k$, respectively. The coefficient matrices are given by $P^k:=(\hat A_t^k)^H\hat A_t^k$ with $\hat A_t^k:=A_t|_{E^k_{\mathcal T_{t_0}}}$, and $Q^k:=(\hat F_{\mathrm r}^k)^H\hat F_{\mathrm r}^k$ with $\hat F_{\mathrm r}^k:=F_{\mathrm r}|_{E^k_{\varTheta_{t_0}}}$. Moreover, $C_t^k:=[H_t^{k,0}-v\Pi^k_t+({v}/{\sigma_t}) A_t^HY_t(F_{\mathrm r}^H)^\top]|_{E^k}$, where the restriction is applied according to the same row--column index sets.

As a simple illustration, when one restricted sensing factor is semi-unitary, e.g., $(\hat F_{\mathrm r}^k)^H\hat F_{\mathrm r}^k=I$, the Sylvester equation reduces to linear solves over a single domain and only the inverse $(I+({v}/{\sigma_t})P^k)^{-1}$ is required. Let $m=|E^k_{\mathcal T_{t_0}}|$ and $n=|E^k_{\varTheta_{t_0}}|$. The cost of forming this inverse is $\mathcal O(m^3)$, while applying it to $C_t^k\in\mathbb C^{m\times n}$ costs $\mathcal O(m^2n)$. Hence, in the considered sparse regime, these small-domain inversions are not necessarily the dominant cost.

In the general case where both delay and angular grids are restricted to a sparse subspace, such a single-domain reduction is unavailable. Since the $\mathcal F_t$-subproblem is embedded in the outer ASM iteration, we do not solve the resulting generalized Sylvester equation to full convergence at each iteration. Instead, we keep the inversions decoupled over the two restricted domains and, motivated by a factorized alternating-direction implicit iteration~\cite{ADI} and splitting methods for linear matrix equations~\cite{Simoncini2016MatrixEquations,Benner2009ADI}, apply one inexact, structure-preserving splitting step. Specifically, with $\alpha:=\sqrt{v/\sigma_t}$, we update
\begin{equation}
\begin{aligned}
    H_t^{k,1}|_{E^k}\leftarrow& (I+{\alpha} P^k)^{-1}[C_t^{k}+{\alpha} P^k(H_t^{k,0}|_{E^k})\\
    &+{\alpha} (H_t^{k,0}|_{E^k})(Q^k)^\top](I+{\alpha} (Q^k)^\top)^{-1},\label{eq:syl_scheme}
\end{aligned}
\end{equation}
which can be viewed as a symmetric splitting step for~\eqref{eq:Sylvester}. The fixed point of~\eqref{eq:syl_scheme} coincides with the solution of the Sylvester equation, and the associated error iteration has spectral radius strictly smaller than one, as obtained by diagonalizing the Hermitian positive semidefinite factors $P^k$ and $(Q^k)^\top$. This gives a basic fixed-point consistency and contraction guarantee for the inexact $\mathcal F_t$-update. The two inverses in~\eqref{eq:syl_scheme} are formed only over the restricted delay and angular dimensions and are therefore amenable to the low-rank updates in~\cite{ASM}, since both $P^k$ and $Q^k$ are Gram matrices.

\subsection{Practical Strategies for Online Processing}\label{sub:online_processing}

Several additional strategies can further reduce the computational burden of online tracking. First, the joint recovery framework requires solving the $\mathcal F_t$-subproblem~\eqref{eq:RLS} for every $t\in\Omega_{t_0}$ at each iteration. Even under a fixed per-slot budget, allocating equal computational effort to recent and older observations is unnecessary because the recent observations are generally more relevant to the latest-slot reconstruction. Motivated by stochastic optimization methods for linear inverse problems~\cite{pmlr-v32-zhong14,siam:StochasticPDHG}, we therefore always update the recent-slot subproblems while randomly updating only a subset of the older-slot subproblems at each iteration. 

Specifically, at the $t_0$-th sounding slot, we partition the window into a recent part $\Omega^+_{t_0}:=\{t_0,t_0-1,\ldots,t_0-c_T+1\}$ and a historical part $\Omega^-_{t_0}:=\Omega_{t_0}\backslash\Omega^+_{t_0}$. At each iteration, $H_t^k$ is updated for every $t\in\Omega^+_{t_0}$ and for a randomly selected subset of $r_T$ slots from $\Omega^-_{t_0}$. The remaining historical slots, collected in $\tilde\Omega^k_{t_0}$, are frozen during that iteration.

% \textcolor{red}{
% As an online tracking task, there are additional strategies for further relieving the computational burden. First, the joint recovery framework involves solving the $\mathcal F_t$-subproblem~\eqref{eq:RLS} for all $t\in\Omega_{t_0}$ at every iteration. Even if we control the per-slot budget, equally assigning the computational resource can be unnecessary. Motivated by the stochastic optimization methods for linear inverse problem~\cite{pmlr-v32-zhong14,siam:StochasticPDHG}, we introduce a random dropout strategy by keeping the relevant lately comming slots $\mathcal F_t$-subproblem at each iteration while randomly freezing the long-term updates, leading to budget redistribution.
% Specifically, at the $t_0$-th sounding slot, we divide the window into a recent part $\Omega^+_{t_0}:=\{t_0,t_0-1,\ldots,t_0-c_T+1\}$ and a long-term historical part $\Omega^-_{t_0}:=\Omega_{t_0}\backslash \Omega^+_{t_0}$. At each iteration, we always update $H_t^k$ for $t\in\Omega^+_{t_0}$, while only updating $H_t^k$ for a randomly selected subset of $r_T$ slots from $\Omega^-_{t_0}$ and frozen on the the residual set $\tilde\Omega^{k}_{t_0}$.
% }

Second, we employ a support-set-based adaptation of $\lambda_{t_0}$, which remains fixed throughout the iterations at the $t_0$-th slot. After obtaining $X^{(t_0)}$, its final support is used to determine the regularization coefficient $\lambda_{t_0+1}$. 
We use $E_{\mathcal T_{t_0}}^{k_{\max}}$ as the reference set because the computational cost is particularly sensitive to the delay-grid size. Specifically, given a reference support size $C$ and an adjustment interval $C_r$ (e.g., $C=100$ and $C_r=50$), we first set
$\lambda_{t_0+1}:=\lambda_{t_0}\bigl[1+0.1\lfloor(|E_{\mathcal T_{t_0}}^{k_{\max}}|-C)/C_r\rfloor\bigr]$
and then clip $\lambda_{t_0+1}$ to the interval $[0.1\lambda_0,10\lambda_0]$.

\section{Convolutional Reweighting for DDA-Domain Structured Sparsity}\label{sec:conv_reweigh}

\subsection{Position Encoded Convolutional Reweighting}\label{sub:convolutional_reweighting}

The preceding section developed the ASM solver for the windowed model~\eqref{eq:SingleWindowModel} while treating the weights $\{\lambda_\zeta^k\}_{\zeta\in\mathcal D_{t_0}}$ as given. Classical coefficientwise reweighting can be interpreted through the Laplacian scale-mixture framework~\cite{NIPS2010_2d6cc4b2}: assigning each $\lambda_\zeta$ a $\mathrm{Gamma}(a,b)$ hyperprior $(a,b>0)$ yields the posterior-mean update $\lambda_\zeta^{k+1}:=(a+1)/(b+|X_\zeta^k|)$, followed by solving the corresponding fixed-weight LASSO problem. This inverse-amplitude mechanism assigns larger penalties to small estimated coefficients, thereby reinforcing sparsity and reducing the regularization bias on significant coefficients.

As shown in~\cite{NIPS2010_2d6cc4b2,TIPcrL,JunFang2015}, this mechanism can be generalized from individual coefficients to locally aggregated amplitudes, thereby incorporating position-encoded priors. Motivated by the Doppler-domain grouping characterized in Proposition~\ref{prop:smooth_Doppler} and the angular-domain grouping in~\cite{GSAng,STCS}, we adopt
\begin{equation}
    \lambda_{\zeta}^{k+1}:=\lambda_{t_0}\left/\left(\frac1{|\mathcal N_\zeta|}\sum_{\zeta'\in\mathcal N_\zeta}|X^k_{\zeta'}|+\epsilon\right)\right.,\label{eq:CLR_update}
\end{equation}
where $\lambda_{t_0}$ is a benchmark parameter, $\epsilon>0$ avoids singularity (e.g., $\epsilon=0.1$), and $\mathcal N_\zeta$ denotes the DDA-domain neighborhood acting as a convolutional kernel. Analogous to the elementwise inverse-amplitude mechanism, this rule favors coefficients supported by energetic neighborhoods while suppressing isolated coefficients, which are more likely to be artifacts caused by the high mutual coherence of the sensing matrix under the undersampled hopping pattern. Similar to the strategy in~\cite{TIPcrL}, where each reweighting update is followed by a single inner-solver iteration, we embed \eqref{eq:CLR_update} into the $\mathcal G_\lambda$-subproblem in~\eqref{eq:order_update}, with $X^k$ in~\eqref{eq:CLR_update} replaced by $X^{k,0}$ in~\eqref{eq:order_update}.

\subsection{The Randomized Convolutional Reweighting}

To reduce the additional computational cost of convolutional reweighting, we adopt the following strategies.
First, rather than using a product-space neighborhood in~\eqref{eq:CLR_update}, we decouple the aggregation over the Doppler and angular domains. For each central point $\zeta:=(\tau,\theta,\nu)\in\mathcal D_{t_0}$, we define an angular neighborhood $\mathcal N^{\varTheta}_\zeta$ and a Doppler neighborhood $\mathcal N^{\mathcal V}_\zeta$, and set $\mathcal N_\zeta:=\mathcal N^{\varTheta}_\zeta\cup \mathcal N^{\mathcal V}_\zeta$. Specifically, let $\bar\varTheta$ denote the full angular grid, to be detailed in Sec.~\ref{sub:estimation_space}, and decompose it as $\bar\varTheta:=\bar\varTheta_x\times\bar\varTheta_y\times\bar\varTheta_p$ over the horizontal, vertical, and polarization dimensions. For $\theta=(\theta_x,\theta_y,p)$, define $\boldsymbol r:=(r_x,r_y,r_p)$, together with the angular and Doppler bounds $r_{\rm ang}^c$ and $r_{\rm Dop}^c$, respectively, and set
\begin{align}
    \mathcal N^{\varTheta}_\zeta
    &:=\{(\tau,\theta+\boldsymbol r,\nu): |r_x|\le 1,\ |r_x|+|r_y|\le r_{\rm ang}^c\},\label{eq:neighbor_theta}\\
    \mathcal N^{\mathcal V}_\zeta
    &:=\{(\tau,\theta,\nu+r): |r|\le r_{\rm Dop}^c,\ \nu+r\in\mathcal V_{t_0}\}.\label{eq:neighbor_Dop}
\end{align}
We constrain $|r_x|\le 1$ because the horizontal angular dimension is typically small, e.g., $N_x=4$. This decoupled stencil preserves local energy aggregation while keeping the reweighting step lightweight.

Second, rather than averaging over the entire $\mathcal N_\zeta$ at every iteration, we randomly sample a subset $\widetilde{\mathcal N}_\zeta\subseteq \mathcal N_\zeta$ and evaluate~\eqref{eq:CLR_update} over $\widetilde{\mathcal N}_\zeta$. Since the sampled stencil is refreshed during the iterations, the reweighting rule can still explore a broader neighborhood over time.
Finally, since the reweighted parameters $\lambda_\zeta^k$ serve only as adaptive regularization coefficients, moderate numerical accuracy is generally sufficient. We therefore perform the entire reweighting procedure in single precision to reduce memory traffic, while leaving the main sparse-recovery iteration unchanged.

\section{Determination of the Estimation Space and Residual Synchronization Offsets}

\subsection{History-Aided Construction of the Estimation-Space}\label{sub:estimation_space}

The estimation-space construction effectively produces a nonuniform grid with locally varying resolution. A coarser grid reduces the computational cost but may fail to capture accurate path locations. In dynamic channel tracking, historical estimates can be used to refine the grid selectively, thereby balancing reconstruction accuracy and computational efficiency.
Using the delay domain as an example, we construct $\mathcal T_{t_0}:=\mathcal T_{0}\cup \mathcal T'_{t_0}\cup \mathcal T''_{t_0|t_0-1}$ from three components, serving as $\mathcal J_{t_0}$ in Sec.~\ref{sub:streamline_for_online_processing}. To keep the recovery on-grid, we first fix a uniform oversampled grid as the finest admissible candidate space. Specifically, in the delay domain, $\bar{\mathcal T}$ denotes the grid obtained by refining the DFT grid associated with full-band subcarrier observations by a super-resolution factor of $r_{\rm delay}^+$.

In dynamic channel tracking, the location parameters of each path usually evolve slowly. Hence, if the model over the previous window $\Omega_{t_0-1}$ based on the estimation space $\mathcal T_{t_0-1}$ yields a sparse estimate, then the estimated energy profile is also expected to remain concentrated around nearby grid points in the sparse regression for the next window $\Omega_{t_0}$. Motivated by this observation, we construct a local expansion space $\mathcal T'_{t_0}$ according to the previous delay-domain support as
\begin{equation}
    \mathcal T'_{t_0}:=\left\{\tau\in \bar{\mathcal T}: \{\tau-1,\tau,\tau+1\}\cap \bar E_{\mathcal T_{t_0-1}}^{k_{\max{}}}\neq\emptyset \right\},
\end{equation}
where $\bar E_{\mathcal T_{t_0-1}}^{k_{\max{}}}$ is obtained by embedding the delay-domain active support $E_{\mathcal T_{t_0-1}}^{k_{\max{}}}$, defined in Sec.~\ref{sub:f_t_subproblem} on $\mathcal T_{t_0-1}$, into $\bar{\mathcal T}$, with $k_{\max{}}$ being the maximum number of iterations. The space $\mathcal T'_{t_0}$ captures the local temporal continuity of the active subspace and allows the nonzero locations to be adjusted across slots at the resolution of $\bar {\mathcal T}$.

Nevertheless, local adjustment alone may miss newly appearing or previously uncovered paths caused by abrupt channel changes or inaccurate initialization. We therefore introduce an additional finite-budget subspace supplement by comparing the channel prediction with the latest observation. Specifically, if the estimate $X^{(t_0-1)}$ obtained from $\mathcal T_{t_0-1}$ has an evident grid-coverage deficiency, then the corresponding frequency-domain prediction $\tilde H^{(t_0|t_0-1)}_{\rm hop}$ should exhibit a noticeable mismatch with the observation $Y^{(t_0)}$ in the delay spectrum. After estimating the synchronization offsets, we compute the delay spectrum $R^{(t_0|t_0-1)}$ over $\bar{\mathcal T}$ from the residual $Y^{(t_0)}-\xi(\varepsilon^{(t_0)},\tau^{(t_0)})\tilde H^{(t_0|t_0-1)}_{\rm hop}$ using multi-timeslot multiple-signal-classification (MUSIC)~\cite{SchmidtMUSIC}, and choose the $r_{\rm delay}^{\rm p}$ grid points with the largest spectral values as $\mathcal T''_{t_0|t_0-1}$. Finally, to retain a coarse global anchor, we define $\mathcal T_{0}$ as a uniformly sampled coarse grid with undersampling factor $r_{\rm delay}^{-}$ relative to the same full-band DFT delay grid.
At the first slot, no historical information is available. We therefore set $\mathcal T_1:=\mathcal T_0\cup\mathcal T''_1$, where $\mathcal T''_1$ is determined solely from the delay spectrum of $Y^{(1)}$, using an enlarged value of $r_{\rm delay}^{\rm p}$.

By the structural symmetry of the model, the same estimation-space selection strategy is applied in the angular domain with the corresponding factors $r_{\rm ang}^+$, $r_{\rm ang}^-$, and $r_{\rm ang}^{\rm p}$. Super-resolution is performed along both the horizontal and vertical angular directions, and the resulting full grid is denoted by $\bar\varTheta$, in parallel with $\bar{\mathcal T}$. For the Doppler domain, the number of temporal observations is limited, and thus the refined Doppler space remains moderate in size. We therefore use a simple sampling-interval-dependent truncation rule. As implied by Proposition~\ref{prop:smooth_Doppler}, the Doppler spread induced by the channel variation over the observation window increases with the sounding interval $\delta t$. Hence, for a Doppler grid with fixed resolution, we specify a reference maximum grid size $N_{\mathrm d,\max{}}$ at the minimum sounding interval $\delta t_{\min}$, e.g., $\delta t_{\min}=5$ ms. For a sounding interval $\delta t$, we then set $\mathcal V_{t_0}:=\{1,\ldots,N_{\mathrm d,\max{}}\lfloor\delta t/\delta t_{\min}\rfloor\}$.

\subsection{Estimation of Residual Synchronization Offsets}\label{sub:non_ideal_MLE}

Although the estimation accuracy of the synchronization increments $\Delta\tau^{(t)}$ and $\Delta\varepsilon^{(t)}$ is not itself an evaluation target, large synchronization mismatches act as multiplicative perturbations and can substantially degrade multi-slot recovery. Nevertheless, they can be efficiently estimated via a two-parameter maximum-likelihood problem once the channel reconstructed at slot $t_0-1$ provides a reliable prediction $\tilde H_{\rm hop}^{(t_0|t_0-1)}$ over $\mathcal B_{t_0}$, the subband observed at slot $t_0$. With this prediction, the antenna dimension of the MIMO system supplies sufficient samples for synchronization, and the increments can be estimated from the least-squares fitting
\begin{equation}
    \min_{\Delta\tau,\Delta\varepsilon}
    \big\|Y^{(t_0)}
    -\xi(\varepsilon^{(t_0-1)}+\Delta\varepsilon,
          \tau^{(t_0-1)}+\Delta\tau)
      \tilde H_{\rm hop}^{(t_0|t_0-1)}
    \big\|_{\rm F}^2 .\label{eq:MLE_NonIdeal}
\end{equation}
Using the composition rule $\xi(\varepsilon+\varepsilon',\tau+\tau')=\xi(\varepsilon,\tau)\xi(\varepsilon',\tau')$ for any $\varepsilon,\varepsilon',\tau,\tau'$, and setting $U:=Y^{(t_0)}$ and $V:=\xi(\varepsilon^{(t_0-1)},\tau^{(t_0-1)})\tilde H_{\rm hop}^{(t_0|t_0-1)}$, the above problem reduces to $\min_{\Delta\tau,\Delta\varepsilon}\|U-\xi(\Delta\varepsilon,\Delta\tau)V\|_{\rm F}^2$. Since $\xi(\Delta\varepsilon,\Delta\tau)$ is unitary, this objective is equivalent to minimizing the cross term $-2\Re\{\langle \xi(\Delta\varepsilon,\Delta\tau)V,U\rangle_{\rm F}\}$. The reduced problem then admits a simple two-step solution.

Note that $\xi(\Delta\varepsilon,\Delta\tau)=e^{\mathrm i\Delta\varepsilon}\xi(0,\Delta\tau)$, where $\xi(0,\Delta\tau)$ is a diagonal delay-shift matrix with entries $\exp(-2\pi\mathrm i m\Delta\tau/N_{\rm f})$ for $m\in\mathcal B_{t_0}$. Then the objective becomes $-2\Re\{e^{-\mathrm i\Delta\varepsilon}g(\Delta\tau)\}$, where $g(\Delta\tau):=\langle \xi(0,\Delta\tau)V,U\rangle_{\rm F}$ has the Fourier form
\begin{equation}
    g(\Delta\tau)=\sum_{m\in\mathcal B_{t_0}}
      \Big(\sum_{r=1}^{N_{\rm r}}\overline{V_{m,r}}U_{m,r}\Big)
      e^{2\pi\mathrm i m\Delta\tau/N_{\rm f}} .
\end{equation}
Hence the two parameters can be optimized in a decoupled way. For any fixed $\Delta\tau$, the optimal $\Delta\varepsilon^*$ is given in closed form as $\Delta\varepsilon^*:=\Delta\varepsilon^*(\Delta\tau)=\angle g(\Delta\tau)$, where $\angle$ denotes the argument. Consequently, it suffices to determine $\Delta\tau^*$ by maximizing $|g(\Delta\tau)|$ through a one-dimensional search, after which the phase increment is given by $\Delta\varepsilon^*=\angle g(\Delta\tau^*)$. Since $g(\Delta\tau)$ is exactly the inverse Fourier transform of the per-subcarrier correlation sequence $\sum_r \overline{V_{r,m}}U_{r,m}$ after embedding it into the global frequency grid, this search can be implemented efficiently by an oversampled IFFT with suitable resolution. The estimated increments are then accumulated and used to form $\mathcal E_{t_0}$ in Sec.~\ref{sub:streamline_for_online_processing}, before the estimation-space determination $\mathcal J_{t_0}$ and the DDA recovery $\mathcal R_{t_0}$.

% In implementation, the first slot is used as the synchronization reference, i.e., $\tau^{(0)}=\varepsilon^{(0)}:=0$, since only relative alignment is required in tracking. During the warm-up stage with $t_0<T$, although the predicted coefficients in $\tilde H_{\rm hop}^{(t_0|t_0-1)}$ may still be inaccurate, their delay-domain energy profile can still provide a coarse correlation peak in~\eqref{eq:MLE_NonIdeal}. Thus the update helps reduce an otherwise uncalibrated global mismatch to a localized synchronization error, which in practice behaves more like a mild perturbation to the subsequent DDA recovery and can be further refined along the tracking process.

In implementation, the first slot is used as the synchronization reference, i.e., $\tau^{(0)}=\varepsilon^{(0)}:=0$, because only relative alignment is required for tracking. During the warm-up stage $t_0<T$, the predicted coefficients in $\tilde H_{\rm hop}^{(t_0|t_0-1)}$ may remain inaccurate, but their delay-domain energy profile can still provide a coarse correlation peak in~\eqref{eq:MLE_NonIdeal}. The update therefore reduces an otherwise uncalibrated global mismatch to a localized synchronization error, which behaves as a milder perturbation to the subsequent DDA recovery and can be progressively refined during tracking.

\section{Numerical Experiments}\label{sec:numerical_experiments}

\subsection{Comparison between Different Architectures}

We evaluate the proposed ST-DDA framework using geometry-based channels generated by QuaDRiGa 2.8.1~\cite{Quadriga} under the 3GPP TR~38.901 UMa-NLOS scenario in MATLAB R2026a. The carrier frequency is $3.5$ GHz, and the effective frequency-domain channel contains $408$ uniformly spaced frequency samples with a spacing of $240$ kHz. Each trajectory is generated with randomized UE geometry and mobility: the UE--BS distance is uniformly drawn from $[80,200]$ m within a $\pm60^\circ$ sector, the UE height is uniformly drawn from $[1.2,1.5]$ m, the direction of motion is uniformly drawn from $[0,2\pi)$, and the UE speed is uniformly drawn from $[2,6]$ km/h. The BS is fixed at a height of $20$ m and employs a 3GPP NR multi-panel antenna array with $N_x=4$, $N_y=8$, and $N_p=2$, while the UE employs an omnidirectional antenna. Different sounding intervals $\delta t$ are obtained by sampling each generated channel trajectory at the corresponding rates.

For the main evaluation, we separately sweep the SNR, the temporal sampling interval $\delta t$, and the frequency-hopping period $n_{\rm hop}$. The SNR sweep is produced by varying the additive-noise level, whereas the temporal-sampling sweep is produced by subsampling the snapshots along each channel trajectory. For the hopping-period sweep, we consider $n_{\rm hop}\in\{4,6,8,12,17\}$, partition the $N_{\rm f}=408$ frequency samples into $n_{\rm hop}$ subbands, and determine the sequence of active subbands $\mathcal B_{t_0}$ using the MCC hopping pattern~\cite{MCC}. Unless otherwise specified, the default parameters are ${\rm SNR}=20$ dB, $n_{\rm hop}=12$, and $\delta t=20$ ms, with only the parameter under investigation varied in each sweep.

For the proposed ST-DDA methods, we implement two variants to demonstrate both the basic $\ell_1$ architecture (ST-DDA-L1) and its reweighted variant (ST-DDA-RW).

\textbf{ST-DDA-L1}: The basic framework fixes all $\lambda_\zeta=\lambda_{t_0}$ during the iterations at the $t_0$-th sounding slot. For the estimation space, we set the parameters in Sec.~\ref{sub:estimation_space} as $r_{\rm delay}^+=3$, $r_{\rm ang}^+=1$, $r_{\rm delay}^-=r_{\rm ang}^-=4$, $r_{\rm delay}^{\rm p}=100$, and $r_{\rm ang}^{\rm p}=10$. For the adaptive $\lambda_{t_0}$, we set the budget support size as $150$, and set $\sigma_t:=\sqrt{2\tilde\sigma^2_{\Omega_{t_0}}M_{\rm f}N_{\rm r}|\Omega_{t_0}|\log(|\bar{\mathcal D}_{t_0}|)/|\bar{\mathcal D}_{t_0}|}$, where $\tilde\sigma^2_{\Omega_{t_0}}$ is the averaged noise power over $\Omega_{t_0}$ and $\bar{\mathcal D}_{t_0}:=\bar{\mathcal T}\times{\bar\varTheta}\times\mathcal V_{t_0}$. The maximum iteration number is set to $20$ for each slot. For a hopping period $n_{\rm hop}$, we set the window size as $T:=\max\{\lfloor 2n_{\rm hop}/3\rfloor,4\}$, the splitting factor as $c_T=3$ for $\Omega_{t_0}^+$ in Sec.~\ref{sub:online_processing}, and $r_T=\max\{\lfloor T/2-c_T\rfloor,1\}$ for $\Omega_{t_0}^-$ in the random dropout of the $\mathcal F_t$-update. We set the averaging factor $0.5$ for $H,\Pi$ and $0.9$ for $X$ in \eqref{line:AVE}.

\textbf{ST-DDA-RW} keeps the basic settings above and further introduces the reweighting step. For the reweighting neighborhood, we set $r_{\rm ang}^c=r_{\rm Dop}^c=2$ in~\eqref{eq:neighbor_theta}, \eqref{eq:neighbor_Dop}. For the randomized kernel $\widetilde{\mathcal N}_\zeta$, we randomly select $5$ points from the same-polarization angular neighborhood and $2$ points from the cross-polarization counterpart, always including the central point $\zeta$ and its polarized counterpart. In the Doppler domain, we randomly select half of $\mathcal N_{\zeta}^{\mathcal V}$ while keeping the two sides balanced in number. For the adaptive $\lambda_{t_0}$, we set the budget support size as $100$, and reduce the per-slot iteration budget to $10$. For stabilization, all regularization weights are fixed at $\lambda_{t_0}$ during the first $20$ sounding slots and during the first three iterations of each subsequent slot, so that the corresponding updates reduce to the basic $\ell_1$-based ST-DDA method.

We compare with the following baselines to examine the performance of different latest-slot recovery architectures.

\textbf{Freq-Extra}: Single-slot frequency extrapolation, which does not use historical information and estimates the delay--angular parameters solely from the current subband observation. Specifically, we use a quasi-Newton-based orthogonal matching pursuit (QNOMP)-type super-resolution method~\cite{QNOMP}, adapted to the $4\times 8$ dual-polarized UPA by assigning each physical path a common delay, two angular parameters, and two polarization-dependent complex gains. Local refinement is performed over the delay and two angular dimensions, and at most $50$ physical paths are retained.

\textbf{Time-Retain}: Subband-independent time-domain retaining, which updates only the observed subband at the latest sounding slot and directly retains the previous estimates on the unobserved subbands. For the observed subband, we apply DA-domain OMP~\cite{OMP} with a delay-domain super-resolution factor of three and retain at most $50$ nonzero elements.

\textbf{Dynamic CS}: We implement the PLAY-CS framework~\cite{PLAY-CS}, which incorporates historical information through an $\ell_2$ penalty over the predicted support while retaining an $\ell_1$ penalty elsewhere. At the $t_0$-th sounding slot, the Doppler-aware state transition first produces $X^{(t_0|t_0-1)}=\Psi_{t_0}(X^{(t_0-1)})$, after which $X^{(t_0)}$ is estimated by solving
\begin{align}
    \min_X&\frac1{2\sigma_{t_0}'}\|Y_{t_0}-A_{t_0}XF_{\rm r}^\top\|_{\rm F}^2+\mathcal W(X;E^{(t_0|t_0-1)},X^{(t_0|t_0-1)}),\notag\\
    &\mathcal W(X;E,C):=\lambda_1\|X|_{E^c}\|_1+\lambda_2\|[X-C]|_{E}\|_{\rm F}^2,\label{eq:PLAY_CS}
\end{align}
where $\|\cdot\|_1$ denotes the sum of elementwise moduli, $E^{(t_0|t_0-1)}$ contains the entries accounting for $95\%$ of the cumulative energy of $X^{(t_0|t_0-1)}$, and its remaining entries are truncated to zero. We define $\Psi_{t_0}(X^{(t_0-1)}):=e^{\mathrm{i}\boldsymbol{\nu}^{(t_0-1)}}\odot X^{(t_0-1)}$ and update the Doppler phase increment by $\boldsymbol \nu^{(t_0)}:=\alpha\angle(X^{(t_0)}\odot \overline{X^{(t_0-1)}})+(1-\alpha)\boldsymbol \nu^{(t_0-1)}$ with $\alpha=0.2$, where the overline denotes conjugation and $\odot$ denotes elementwise multiplication. The solution of~\eqref{eq:PLAY_CS} has been discussed in~\cite{ASM} via ASM. For fairness, we restrict the linear equation induced by the data fidelity to the same product-form subspace as in Sec.~\ref{sub:f_t_subproblem}, so that the same scheme~\eqref{eq:syl_scheme} and low-rank updates can be applied. We set $\lambda_1=\lambda_{t_0}$, $\lambda_2=0.2\lambda_{t_0}$, and $\sigma_{t_0}':=\sqrt{2\tilde\sigma^2_{t_0} M_{\rm f}N_{\mathrm r}\log(|\bar{\mathcal T}|\cdot|\bar \varTheta|)/(|\bar{\mathcal T}|\cdot|\bar \varTheta |)}$, where $\tilde \sigma^2_{t_0}$ is the noise power at the $t_0$-th slot and $\lambda_{t_0}$ follows the same budget-based adaptation as ST-DDA-RW while the reference support of $\lambda_{t_0}$ is set as $\{\tau\in\mathcal T_{t_0}: \exists\theta, {\rm s.t.}\ (\tau,\theta)\in E^{(t_0+1|t_0)} \}$. The estimation space uses the same parameters as ST-DDA-RW. To fully exploit PLAY-CS, we allow $50$--$300$ iterations and terminate after at least $50$ iterations once the proximal-gradient residual falls below $0.005$; see~\cite{ASM}.

For each trajectory, we track $100$ slots and use the median performance over the last $50$ steady-state slots for evaluation. The reported results are then obtained by taking the median over $200$ independent channel trajectories. We consider two metrics: quotient normalized mean-squared error (QNMSE) and spatial-domain correlation (SpCor). QNMSE accounts for phase and timing ambiguities by quotienting out the absolute phase and timing offsets, because these absolute offsets need not be identified when aligning the reconstructed frequency--spatial channel $\tilde H^{(t_0)}$ with its ground truth $\tilde H_*^{(t_0)}$. To this end, we define
\begin{equation*}
    {\rm QNMSE}(\tilde H^{(t_0)},\tilde H^{(t_0)}_*):=\min_{\varepsilon,\tau}\frac{\|\tilde H^{(t_0)}_*-\xi(\varepsilon,\tau)\tilde H^{(t_0)}\|_{\rm F}^2}{\|\tilde H^{(t_0)}_*\|_{\rm F}^2},
\end{equation*}
and report it in dB. The minimization is computed efficiently as in Sec.~\ref{sub:non_ideal_MLE}. Since Time-Retain estimates different subbands independently and does not provide cross-band synchronization, we optimize $(\varepsilon,\tau)$ for each subband separately. For SpCor, we evaluate the average normalized spatial correlation over all frequency tones, which is relevant to the beamforming gain~\cite{beamforming} and is invariant to the unitary factor $\xi(\varepsilon,\tau)$. Denoting by $\tilde h_n^{(t_0)}$ and $\tilde h_{*,n}^{(t_0)}$ the spatial channel vectors at the $n$-th frequency tone, we define
\begin{equation*}
    {\rm SpCor}(\tilde H^{(t_0)},\tilde H^{(t_0)}_*):=\frac1{N_{\rm f}}\sum_{n=1}^{N_{\rm f}}\left|\left(\frac{\tilde h_{*,n}^{(t_0)}}{\|\tilde h_{*,n}^{(t_0)}\|_2}\right)^{\!H}\frac{\tilde h_n^{(t_0)}}{\|\tilde h_n^{(t_0)}\|_2}\right|.
\end{equation*}

\begin{figure}[t]
\centering
\includegraphics[width=\columnwidth]{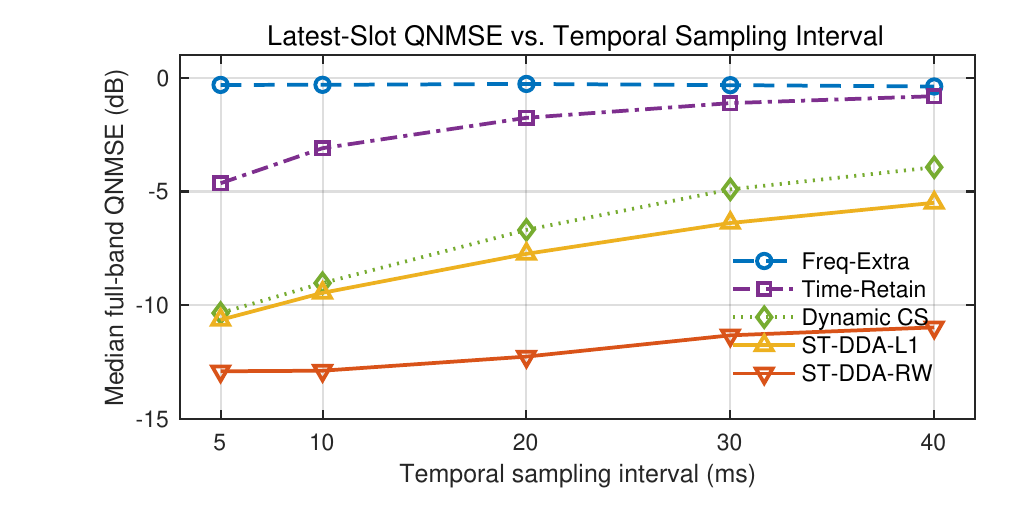}

\vspace{-1em}

\includegraphics[width=\columnwidth]{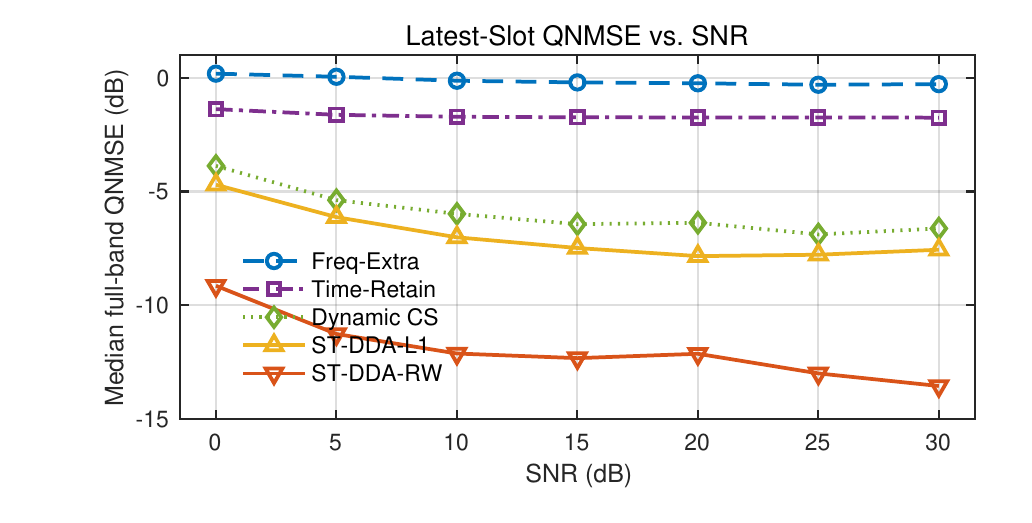}

\vspace{-1em}

\includegraphics[width=\columnwidth]{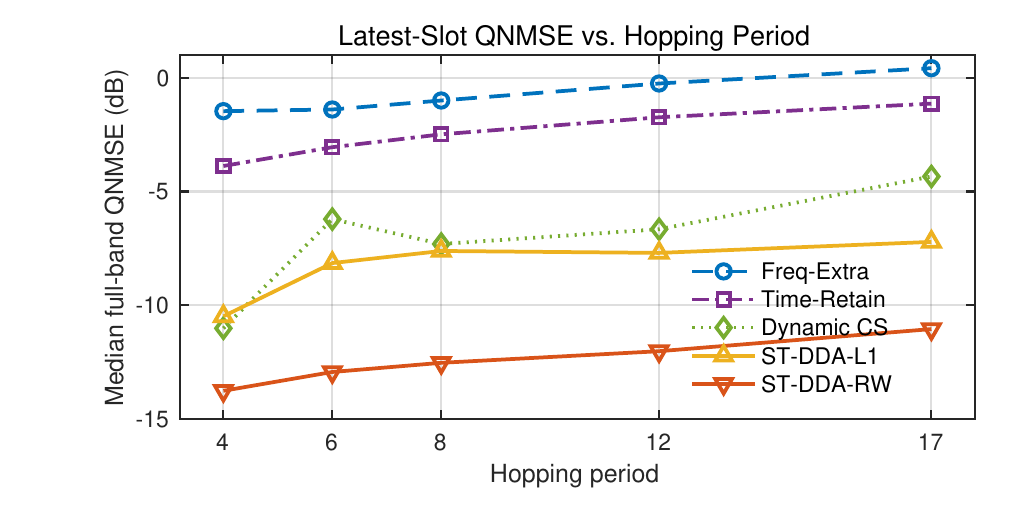}

\vspace{-1em}

\caption{Latest-slot full-band QNMSE comparison between different recovery architectures. 
The three panels sweep the temporal sampling interval $\delta t$, SNR, and hopping period $n_{\rm hop}$, respectively, while fixing the other parameters to their default values. 
Each point reports the median steady-state performance over $200$ QuaDRiGa UMa-NLOS trajectories.}
\label{fig:arch_qnmse}
\end{figure}

Figure~\ref{fig:arch_qnmse} first illustrates the difference between Dynamic CS and the multi-slot ST-DDA-L1 framework. Actually, when a user is allocated sufficient observation (small $n_{\rm hop}$) and the effective channel variation is slow (small $\delta t$), both frameworks perform alike and Dynamic CS can be preferable for its efficiency (see Table~\ref{tab:main_runtime}). As a single-slot filtering method, however, Dynamic CS applies only one Doppler-phase adjustment through $\Psi_{t_0}(\cdot)$ to each DA-domain coefficient and therefore cannot flexibly incorporate a distributed Doppler-domain prior. As $\delta t$ and $n_{\rm hop}$ increase, the per-slot observations become more limited and the channel varies more substantially, under which the multi-slot architectures exhibit greater robustness. The same trend is observed for SpCor, which is directly relevant to beamforming, and for the one-slot-ahead prediction results in Fig.~\ref{fig:arch_spcor_pred}, which affect the estimation of the residual synchronization offsets. We classify SpCor values above $0.8$ and $0.9$ as acceptable and successful, respectively, and report the corresponding trajectory-level rates in Table~\ref{tab:main_spcor_rate}. ST-DDA-RW maintains a substantially more stable success rate across the considered temporal sampling intervals than all baselines.

Although we have applied the ASM framework, which will be discussed later in Sec.~\ref{sub:ablations_ASM}, to alleviate the computational burden of the multi-slot model, ST-DDA-L1 still requires a larger budget intrinsic to the $\ell_1$-regularization. Different from Dynamic CS and ST-DDA-RW, the support budget of $\lambda_{t_0}$ is increased for accuracy since multi-slot joint processing naturally leads to support inflation and a simple $\ell_1$-model can not effectively rule out them. After introducing the position-embedded reweighting, ST-DDA-RW narrows the support range and, as implied in~\cite{ASM}, a stronger structured-sparsity-inducing denoiser empirically help converge faster allowing earlier stopping at each slot (see Table~\ref{tab:main_runtime}).

\begin{table}[h]
\caption{Per-slot steady-state runtime of different architectures.}
\label{tab:main_runtime}

\vspace{-1.0ex}

\centering
\scriptsize
\setlength{\tabcolsep}{2.0pt}
\renewcommand{\arraystretch}{1.05}
\resizebox{\columnwidth}{!}{%
\begin{tabular}{@{}lccccc@{}}
\toprule
Setting & Freq-Extra & Time-Retain & Dynamic CS & ST-DDA-L1 & ST-DDA-RW \\
\midrule
Default        & 0.1957 & 0.01765 & 0.05917 & 0.1665 & 0.1063  \\
Short interval & 0.1958 & 0.01774 & 0.09031 & 0.1124 & 0.06481 \\
Long interval  & 0.1956 & 0.01771 & 0.04851 & 0.2505 & 0.1589  \\
Small hopping  & 0.2716 & 0.04907 & 0.1856  & 0.1868 & 0.1129  \\
Large hopping  & 0.1719 & 0.01278 & 0.04573 & 0.1812 & 0.1118  \\
Low SNR         & 0.03338 & 0.01759 & 0.05811 & 0.1632 & 0.1163  \\
High SNR        & 0.1954 & 0.01777 & 0.1225  & 0.1673 & 0.1076  \\
\bottomrule
\end{tabular}%
}
\vspace{0.5ex}

\parbox{\columnwidth}{\scriptsize\raggedright
\emph{Note:} The default setting is
$(\delta t,{\rm SNR},n_{\rm hop})=(20{\rm ms},20{\rm dB},12)$.
The short/long-interval, small/large-hopping, and low/high-SNR cases set
$\delta t=5/40$ ms, $n_{\rm hop}=4/17$, and ${\rm SNR}=0/30$ dB, respectively,
with the other parameters fixed at their default values. Each entry is in
seconds and is computed as the median over $200$ trajectories of the average per-slot
runtime over the last 50 steady-state tracking slots.
}
\end{table}

\vspace{-3.5ex}

\begin{table}[h]
\caption{Trajectory-level successful/acceptable recovery rates in terms of SpCor under varying temporal sampling intervals.}
\label{tab:main_spcor_rate}

\vspace{-1.0ex}

\centering
\scriptsize
\setlength{\tabcolsep}{2.0pt}
\renewcommand{\arraystretch}{1.05}
\resizebox{\columnwidth}{!}{%
\begin{tabular}{@{}cccccc@{}}
\toprule
$\delta t$ (ms) & Freq-Extra & Time-Retain & Dynamic CS & ST-DDA-L1 & ST-DDA-RW \\
\midrule
5  & 0.0/0.0 & 21.5/53.0 & 84.0/97.5 & 95.0/99.0 & \textbf{97.0/99.5} \\
10 & 0.0/0.0 & 10.5/34.5 & 70.0/92.5 & 90.5/98.0 & \textbf{92.0/98.5} \\
20 & 0.0/0.0 &  6.0/19.0 & 43.0/70.5 & 72.0/94.5 & \textbf{87.5/96.0} \\
30 & 0.0/0.5 &  4.5/14.5 & 28.5/59.0 & 52.5/87.5 & \textbf{90.5/99.0} \\
40 & 0.0/0.5 &  2.0/9.0  & 14.5/49.0 & 37.0/83.0 & \textbf{93.5/99.0} \\
\bottomrule
\end{tabular}%
}
\vspace{0.5ex}

\parbox{\columnwidth}{\scriptsize\raggedright
\emph{Note:} Each entry reports the successful/acceptable rate in percent,
where a trajectory is classified as successful or acceptable if its median
SpCor over the last 50 slots exceeds $0.9$ or $0.8$, respectively.
The rates are computed over $200$ independent trajectories.
}
\end{table}

\begin{figure}[t]
\centering
\includegraphics[width=\columnwidth]{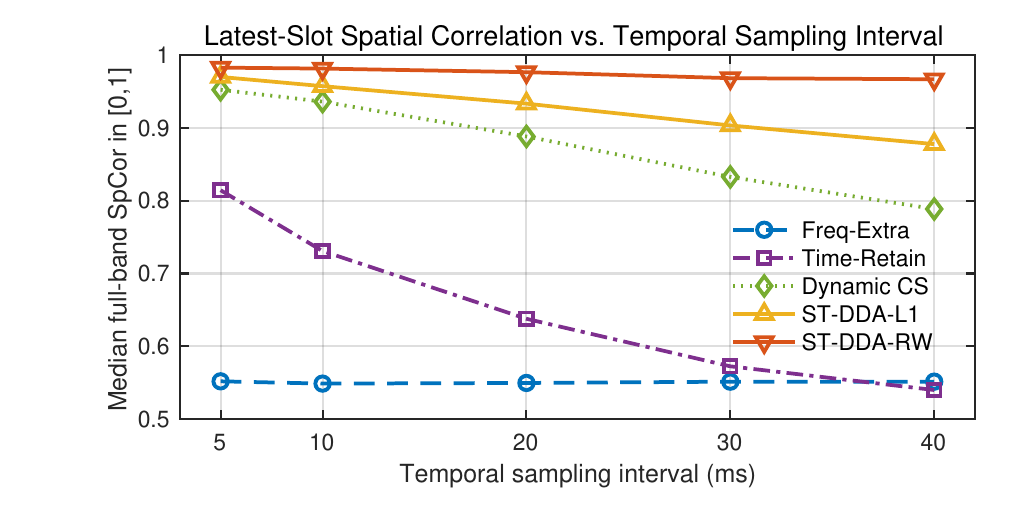}

\vspace{-1em}

\includegraphics[width=\columnwidth]{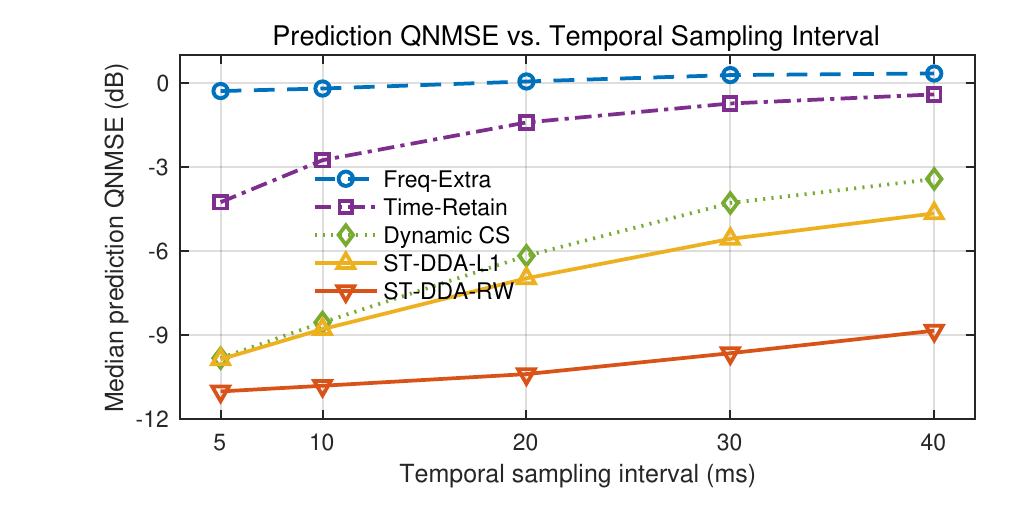}

\vspace{-1em}

\caption{Performance comparison under different temporal sampling intervals. 
[Top]: latest-slot full-band spatial correlation, where higher values indicate better reconstruction. 
[Bottom]: single-slot prediction QNMSE before assimilating the latest subband observation.
Each point reports the median steady-state performance over $200$ QuaDRiGa UMa-NLOS trajectories.}
\label{fig:arch_spcor_pred}

% \vspace{-1em}
\end{figure}

\subsection{Additional Experiments on Statistical Channels}

We further consider a randomized CDL-B statistical ensemble in which the UE speed is uniformly drawn from $[1,4]$ km/h and the RMS delay spread is uniformly drawn from $[50,200]$ ns. This ensemble complements the QuaDRiGa-based geometry simulations with a standardized statistical NLOS test and introduces rich delay-domain dispersion for DDA-sparse recovery. Under this setting, we additionally evaluate two variants of ST-DDA-RW. The first, denoted by \textbf{ST-DDA-Ang}, uses only the angular reweighting neighborhood, i.e., $\mathcal N_\zeta=\mathcal N_\zeta^{\varTheta}$, and omits the Doppler neighbors. The second, denoted by \textbf{ST-DDA-DopExt}, retains both the angular and Doppler neighborhoods but enlarges the Doppler reweighting radius to $r_{\rm Dop}^c=6$ to examine the influence of a wider Doppler kernel at large sampling intervals. All other parameters are kept identical to those used in the preceding experiments.

As shown in Fig.~\ref{fig:cdlb_ablation}, when the sampling interval is small, most of the reweighting gain arises from the angular-domain structure. As the sampling interval increases, channel variation produces a wider Doppler-domain representation, and methods that incorporate Doppler-domain group sparsity become more robust. The results of ST-DDA-DopExt further indicate that the recovery accuracy is not highly sensitive to the precise Doppler reweighting radius, although the best-performing radius may vary across operating conditions.

\begin{figure}[t]
\centering
\includegraphics[width=\columnwidth]{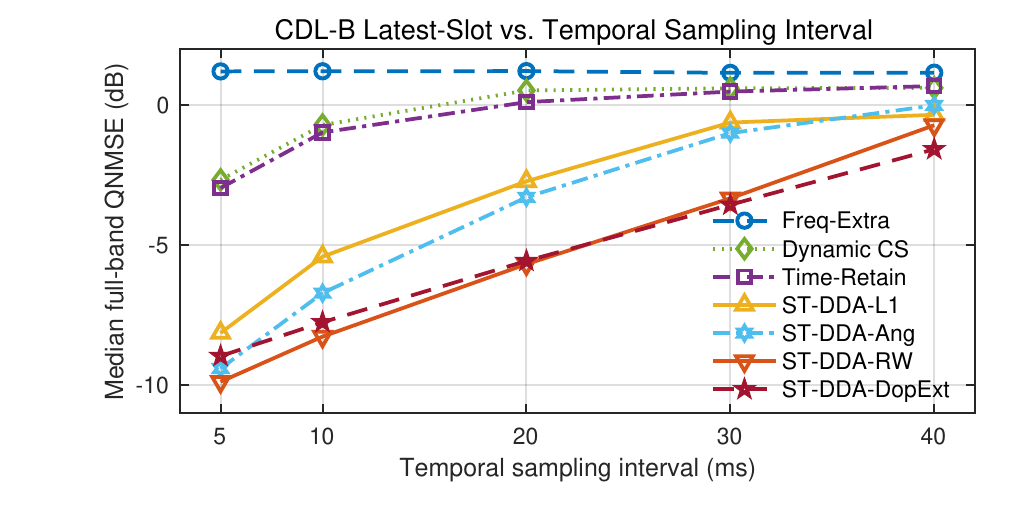}

\vspace{-1em}

\caption{Latest-slot full-band QNMSE comparison on CDL-B channels with respect to the temporal sampling interval $\delta t$ (SNR=$20$dB, $n_{\rm hop}=12$). Each point reports the median steady-state performance over 200 trajectories.}
\label{fig:cdlb_ablation}
\end{figure}

\subsection{Ablations on the Choice of Solvers}\label{sub:ablations_ASM}

ST-DDA-RW is a multi-slot reweighted architecture and is not tied to the ASM solver in Sec.~\ref{sec:ASM_Iteration}. The reweighting rule and the random dropout strategy can also be combined with other solvers. We therefore compare the following variants. (i) \textbf{PDHG}: the primal--dual hybrid gradient (PDHG) method, implemented using the Chambolle--Pock iteration~\cite{C-P}, solves~\eqref{eq:SingleWindowModel} without matrix inversions and is applicable to irregular hopping patterns. (ii) \textbf{ADMM}: the full-space counterpart of ASM, where the $\mathcal F_t$-subproblem and the fidelity step~\eqref{eq:sub_data_fidelity} in the $\mathcal G_\lambda$-subproblem are both solved over the full estimation space $\mathcal D_{t_0}$. Since the full-space operator is fixed at each slot, the inverse matrices in~\eqref{eq:syl_scheme} can be precomputed and reused. (iii) \textbf{ADMM-ASM}: an ablation of ST-DDA-RW that replaces only the $\mathcal F_t$-subproblem by its full-space version. (iv) \textbf{ASM-ISTA}: an ablation that removes the fidelity step~\eqref{eq:sub_data_fidelity} from the $\mathcal G_\lambda$-subproblem, so that this subproblem is solved by a simple ISTA iteration~\eqref{eq:sub_ISTA}.

We vary the delay-domain undersampling rate $r_{\rm delay}^-$ and define the undersampling factor (UF) as $|\bar{\mathcal T}|/|\mathcal T_0|$ to evaluate the accuracy under different degrees of estimation-space completeness (e.g., $r_{\rm delay}^-=4$ and $r_{\rm delay}^+=3$ lead to UF$=12$). Reducing this factor enlarges the base estimation grid and thus serves as a controlled proxy for the estimation-space inflation that may occur during channel tracking, allowing us to examine the corresponding runtime growth. At each slot, all methods are assigned a budget of $10$ iterations, except PDHG, which is assigned $20$. 

As shown in Fig.~\ref{fig:Ablation_Grid_Size}, despite the doubled budget, the slower convergence of PDHG prevents it from attaining the steady-state accuracy of the splitting-based variants, whose RLS-based data-fidelity steps more effectively approach the window-model minimizer. The same effect is observed within the $\mathcal G_\lambda$-subproblem: removing the fidelity step leads to the substantial performance degradation of ASM-ISTA. The remaining full-fidelity variants achieve accuracy competitive with the fully ASM-based ST-DDA-RW. Nevertheless, as shown in Table~\ref{tab:solver_undersampling_runtime}, restricting the fidelity steps to selected subspaces allows ST-DDA-RW to exhibit substantially milder runtime growth as the estimation space inflates.

\begin{figure}[t]
\centering
\includegraphics[width=\columnwidth]{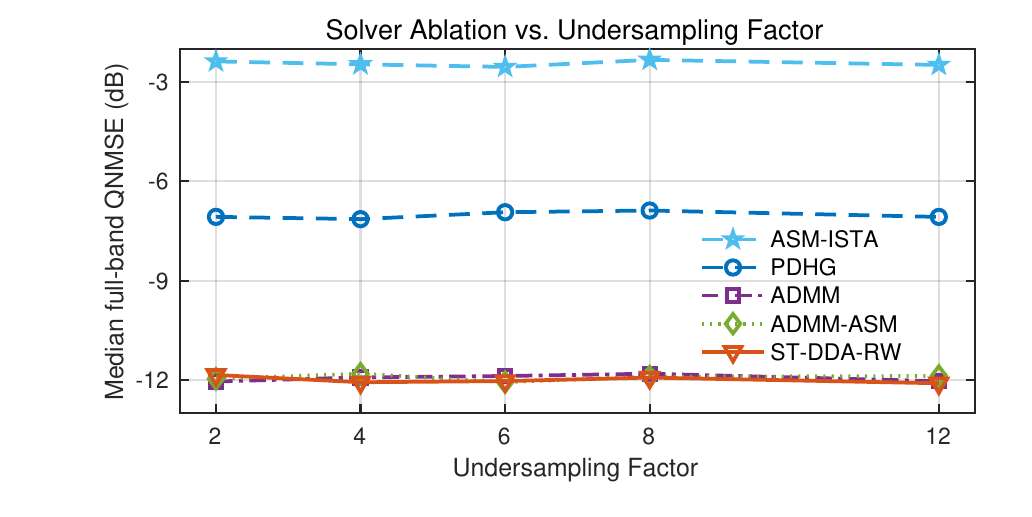}

\vspace{-.5em}

\includegraphics[width=\columnwidth]{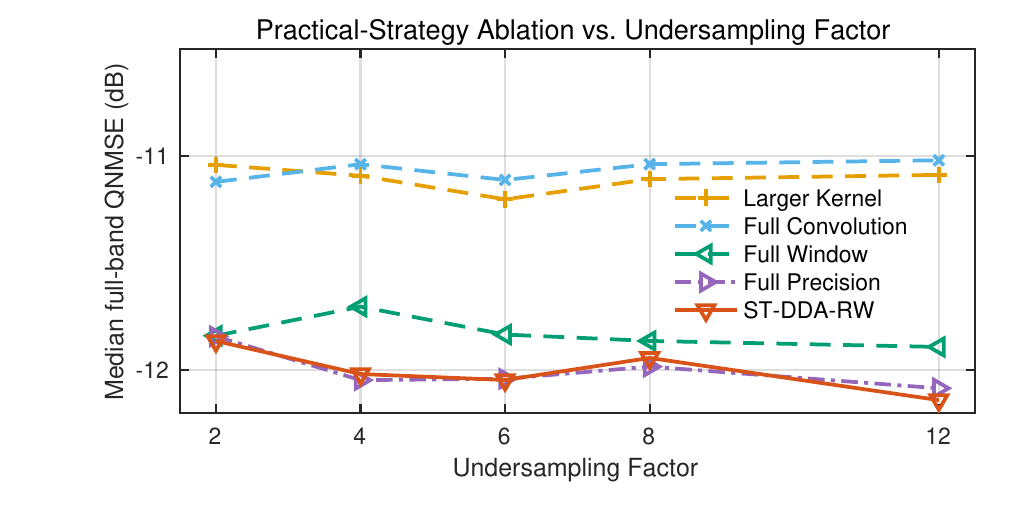}

\vspace{-1em}

\caption{Latest-slot full-band QNMSE on QuaDRiGa UMa-NLOS channels versus the delay-domain UF at SNR $=20$ dB, $n_{\rm hop}=12$, and $\delta t=20$ ms: [top] solver ablation; [bottom] practical-strategy ablation. Each point reports the median steady-state performance over $200$ trajectories.}
\label{fig:Ablation_Grid_Size}
\par\vspace{-1.5ex}
\end{figure}

\vspace{-1.5ex}

\begin{table}[h]
\caption{Per-slot steady-state runtime of different solver variants under varying delay undersampling factors (UF).}
\label{tab:solver_undersampling_runtime}

\vspace{-1.5ex}

\centering
\scriptsize
\setlength{\tabcolsep}{4.0pt}
\renewcommand{\arraystretch}{1.05}
\begin{tabular}{@{}cccccc@{}}
\toprule
UF & ASM-ISTA & PDHG & ADMM & ADMM-ASM & ST-DDA-RW \\
\midrule
2  & 0.2529 & 0.4059 & 0.4201 & 0.3224 & \textbf{0.1882} \\
4  & 0.1690 & 0.2811 & 0.2301 & 0.1799 & \textbf{0.1410} \\
6  & 0.1404 & 0.2416 & 0.1773 & 0.1433 & \textbf{0.1233} \\
8  & 0.1275 & 0.2184 & 0.1547 & 0.1276 & \textbf{0.1144} \\
12 & 0.1155 & 0.1922 & 0.1325 & 0.1120 & \textbf{0.1060} \\
\bottomrule
\end{tabular}
\par\vspace{.8ex}
\parbox{\columnwidth}{\scriptsize\raggedright
\emph{Note:} Each entry is in seconds and is computed as the median over $200$ trajectories of the average per-slot runtime over the last $50$ steady-state tracking slots. The best runtime under each undersampling factor is highlighted in bold.
}
\end{table}
\par\vspace{-1.5ex}

\subsection{Ablations on the Practical Strategies for Efficiency}

We next ablate the practical strategies introduced for computational efficiency. Since all compared methods are variants of ST-DDA-RW, each is named after the corresponding modification:
(i) \textbf{Full Window}, which disables the random dropout in Sec.~\ref{sub:online_processing} and uses the complete window $\Omega_{t_0}$ at every iteration;
(ii) \textbf{Full Convolution}, which uses the complete reweighting neighborhood, i.e., $\widetilde{\mathcal N}_\zeta=\mathcal N_\zeta$ for every $\zeta$;
(iii) \textbf{Full Precision}, which performs the entire reweighting procedure in full precision; and
(iv) \textbf{Larger Kernel}, which enlarges both the Doppler and the angular convolutional kernel to $r_{\rm Dop}^c=8$ and $r_{\rm ang}^c=3$ to evaluate the sensitivity to the kernel size.

The accuracy versus the delay-domain UF is reported in Fig.~\ref{fig:Ablation_Grid_Size}, with the corresponding runtime given in Table~\ref{tab:practical_strategy_runtime}. The proposed practical strategies reduce the computational cost without appreciable accuracy loss. Moreover, the accuracy is not highly sensitive to the kernel size, as even an overly large kernel causes only mild performance degradation; nevertheless, a larger kernel is not recommended as the default because of its additional computational cost.

\vspace{-1.5ex}
\begin{table}[h]
\caption{Per-slot steady-state runtime of practical-strategy
variants under varying delay-domain undersampling factors
(UF).}
\label{tab:practical_strategy_runtime}
\vspace{-1.5ex}
\centering
\scriptsize
\setlength{\tabcolsep}{3.0pt}
\renewcommand{\arraystretch}{1.05}
\begin{tabular}{@{}cccccc@{}}
\toprule
UF
& \shortstack{Larger Kernel}
& \shortstack{Full Convolution}
& \shortstack{Full Window}
& \shortstack{Full Precision}
& ST-DDA-RW \\
\midrule
2  & 0.2066 & 0.2028 & 0.2199 & 0.1914 & \textbf{0.1865} \\
4  & 0.1583 & 0.1550 & 0.1622 & 0.1441 & \textbf{0.1381} \\
6  & 0.1358 & 0.1387 & 0.1436 & 0.1275 & \textbf{0.1215} \\
8  & 0.1263 & 0.1295 & 0.1340 & 0.1180 & \textbf{0.1121} \\
12 & 0.1175 & 0.1200 & 0.1248 & 0.1097 & \textbf{0.1042} \\
\bottomrule
\end{tabular}
\par\vspace{.8ex}
\parbox{\columnwidth}{\scriptsize\raggedright
\emph{Note:} Each entry is in seconds and is computed as the
median over $200$ trajectories of the average per-slot runtime
over the last $50$ steady-state tracking slots. The best runtime
under each undersampling factor is highlighted in bold.
}
\end{table}

% \vspace{-1.5ex}

% \begin{table}[h]
% \caption{Per-slot steady-state runtime of practical-strategy variants under varying temporal sampling intervals.}
% \label{tab:practical_strategy_runtime}

% \vspace{-1.5ex}

% \centering
% \scriptsize
% \setlength{\tabcolsep}{3.2pt}
% \renewcommand{\arraystretch}{1.05}
% \begin{tabular}{@{}cccccc@{}}
% \toprule
% \shortstack{$\delta t$ (ms)}
% & \shortstack{Larger-Kernel}
% & \shortstack{Full-Convolution}
% & \shortstack{Full-Window}
% & \shortstack{Full-Precision}
% & ST-DDA-RW \\
% \midrule
% 5  & 0.0688 & 0.0718 & 0.0838 & 0.0628 & \textbf{0.0619} \\
% 10 & 0.0818 & 0.0854 & 0.0948 & 0.0766 & \textbf{0.0733} \\
% 20 & 0.1162 & 0.1180 & 0.1232 & 0.1089 & \textbf{0.1032} \\
% 30 & 0.1597 & 0.1527 & 0.1537 & 0.1419 & \textbf{0.1340} \\
% 40 & 0.1865 & 0.1766 & 0.1733 & 0.1637 & \textbf{0.1540} \\
% \bottomrule
% \end{tabular}
% \par\vspace{.8ex}
% \parbox{\columnwidth}{\scriptsize\raggedright
% \emph{Note:} Each entry is in seconds and is computed as the median over $200$ trajectories of the average per-slot runtime over the last $50$ steady-state tracking slots. The best runtime at each temporal sampling interval is highlighted in bold.
% }
% \end{table}

\section{Conclusion}

Dynamic CS provides a practical route for exploiting historical channel information. Although joint multi-slot processing with DDA-domain priors has shown strong potential in various settings, its application to frequency-hopping systems remains challenging. This work showed that slowly varying channels can retain Doppler-domain energy concentration over a finite window, thereby supporting a multi-slot DDA-sparse recovery model. To alleviate the resulting computational burden, our local stability analysis motivated a cross-slot online architecture in which each slot retains a well-defined sparse objective while inheriting the preceding-window estimate. This modular formulation allows efficient sparse solvers and structured priors to be incorporated separately. By combining ASM with convolutional reweighting, ST-DDA-RW behaves more robust than dynamic CS at a comparable per-slot computational scale, particularly for longer sounding intervals and larger hopping periods. These results establish ST-DDA as a competitive architecture for frequency-hopping OFDM systems and provide a basis for further algorithmic improvements.

\appendices

\section{Proof of Proposition~\ref{prop:smooth_Doppler}}\label{app:prop_smooth_doppler}

Let $\psi_\nu(t):=\vp(t)-2\pi \nu t$. Since $\vp'(t)/(2\pi)\in[\nu_{\min},\nu_{\max}]$ and $\nu\in\mathcal I_{\Delta\nu}^c$, we have $|\vp'(t)/(2\pi)-\nu|>\Delta\nu$ for all $t\in[0,T_{\rm w}]$, and hence $|\psi_\nu'(t)|\ge 2\pi\Delta\nu$. By $y(t)=x(t)e^{\mathrm i\vp(t)}$, we write $\tilde x(\nu)=T_{\rm w}^{-1}\int_0^{T_{\rm w}}x(t)e^{\mathrm i\psi_\nu(t)}\,{\rm d}t$. Using $e^{\mathrm i\psi_\nu(t)}=(\mathrm i\psi_\nu'(t))^{-1}{\rm d}(e^{\mathrm i\psi_\nu(t)})/{\rm d}t$ and integrating by parts, we obtain
\begin{equation*}
    \tilde x(\nu)=\frac{1}{T_{\rm w}}\left[\frac{x(t)e^{\mathrm i\psi_\nu(t)}}{\mathrm i\psi_\nu'(t)}\right]_0^{T_{\rm w}}
    -\frac{1}{T_{\rm w}}\int_0^{T_{\rm w}}e^{\mathrm i\psi_\nu(t)}
    \frac{{\rm d}}{{\rm d}t}\left[\frac{x(t)}{\mathrm i\psi_\nu'(t)}\right]{\rm d}t .
\end{equation*}
The boundary term is bounded by $c_0/(T_{\rm w}\pi\Delta\nu)$. Moreover, $|\frac{{\rm d}}{{\rm d}t}[x(t)/(\mathrm i\psi_\nu'(t))]|
\leq |x'(t)|/|\psi_\nu'(t)|+|x(t)||\psi_\nu''(t)|/|\psi_\nu'(t)|^2
\leq c_0 r_1/(2\pi\Delta\nu)+c_0 r_2/(2\pi\Delta\nu)^2$, where $\psi_\nu''(t)=\vp''(t)$. Combining the two bounds yields~\eqref{eq:prop1}.

\section{Proof of Proposition~\ref{prop:parent_child_error}}\label{app:prop_parent_child_error}

Define $q_-:=\nabla\ell_{T+1}(\hat x_0)$ and $q_+:=\nabla\ell_1(\hat x_0)$, and let $h_+:=\hat x_+-\hat x_0$ and $h_-:=\hat x_--\hat x_0$. Since $\operatorname{supp}(\hat x_0)\subseteq \mathcal S$ and $\|z_0|_{\mathcal S^c}\|_\infty\le 1-\gamma$, the $\ell_1$-Bregman divergence $D^{z_0}_{1}(\hat x_0+h,\hat x_0):=\|\hat x_0+h\|_1-\|\hat x_0\|_1-\Re\{z_0^Hh\}$ satisfies
\begin{equation}
    D^{z_0}_{1}(\hat x_0+h,\hat x_0)
    \geq \gamma\|h_{S^c}\|_1,\qquad (\forall h).
    \label{eq:bregman_margin_short}
\end{equation}
Indeed, $D_1^{z_0}(\hat x_0+h,\hat x_0)$ is separable across coordinates. Its contribution over $\mathcal S$ is nonnegative, whereas over $\mathcal S^c$ we have $\hat x_0|_{\mathcal S^c}=0$ and $|h_j|-\Re\{\overline{z_{0,j}}h_j\}\geq\gamma|h_j|$ for every $j\in\mathcal S^c$.

The optimality of $\hat x_-$ gives $\sum_{t\in\Omega_-}\ell_{t}(\hat x_-)+\lambda\|\hat x_-\|_1\leq\sum_{t\in\Omega_-}\ell_{t}(\hat x_0)+\lambda\|\hat x_0\|_1$, which equivalently leads to
\begin{equation}
    \frac12\sum_{t=1}^{T}|(F_{\mathrm d})_t h_-|^2
    +\lambda D^{z_0}_{1}(\hat x_0+h_-,\hat x_0)
    \leq
    \Re\{q_-^Hh_-\}.
    \label{eq:child_basic_minus}
\end{equation}

Combining \eqref{eq:bregman_margin_short} and \eqref{eq:child_basic_minus}, we know $\lambda\gamma\|h_-|_{\mathcal S^c}\|_1\leq\delta_-\|h_-\|_1$, where we use the fact that $\sum_{t=1}^{T}\nabla\ell_t(\hat x_0)+\lambda z_0=-\nabla\ell_{T+1}(\hat x_0)=-q_-$, by the parent KKT identity, and $\Re\{q_-^Hh_-\}\leq \delta_-\|h_-\|_1$. Since $\delta_-\leq\lambda\gamma/2$, we have $\|h_-|_{\mathcal S^c}\|_1\leq \|h_-|_\mathcal S\|_1$. Then by the restricted curvature on the shared slots and again using \eqref{eq:child_basic_minus}, we get
\begin{equation}
    \frac{\kappa_c}{2}\|h_-\|_2^2
    \leq
    \delta_-\|h_-\|_1
    \leq
    \delta_-2\sqrt{|\mathcal S|}\|h_-\|_2 .
\end{equation}
Hence $\|h_-\|_2 \leq 4\delta_-\kappa_c^{-1}\sqrt{|\mathcal S|}$. Applying the same argument to $\hat x_+$ gives $\|h_+\|_2\leq4\delta_+\kappa_c^{-1}\sqrt{|\mathcal S|}$. The desired result then follows from $\|\hat x_+-\hat x_-\|_2\leq\|h_+\|_2+\|h_-\|_2$.

\section*{Acknowledgment}

The authors acknowledge the support from National Key R\&D Program of China under grant 2021YFA1003301, and National Science Foundation of China under grant 12288101.

\bibliographystyle{IEEEtran}
\bibliography{IEEEabrv, STDDA}

% Generated by IEEEtran.bst, version: 1.14 (2015/08/26)
\begin{thebibliography}{10}
\providecommand{\url}[1]{#1}
\csname url@samestyle\endcsname
\providecommand{\newblock}{\relax}
\providecommand{\bibinfo}[2]{#2}
\providecommand{\BIBentrySTDinterwordspacing}{\spaceskip=0pt\relax}
\providecommand{\BIBentryALTinterwordstretchfactor}{4}
\providecommand{\BIBentryALTinterwordspacing}{\spaceskip=\fontdimen2\font plus
\BIBentryALTinterwordstretchfactor\fontdimen3\font minus
  \fontdimen4\font\relax}
\providecommand{\BIBforeignlanguage}[2]{{%
\expandafter\ifx\csname l@#1\endcsname\relax
\typeout{** WARNING: IEEEtran.bst: No hyphenation pattern has been}%
\typeout{** loaded for the language `#1'. Using the pattern for}%
\typeout{** the default language instead.}%
\else
\language=\csname l@#1\endcsname
\fi
#2}}
\providecommand{\BIBdecl}{\relax}
\BIBdecl

\bibitem{CS}
D.~Donoho, ``Compressed sensing,'' \emph{IEEE Transactions on Information
  Theory}, vol.~52, no.~4, pp. 1289--1306, Apr. 2006.

\bibitem{CSTao}
E.~J. Cand{\`e}s, J.~K. Romberg, and T.~Tao, ``Stable signal recovery from
  incomplete and inaccurate measurements,'' \emph{Communications on Pure and
  Applied Mathematics}, vol.~59, no.~8, pp. 1207--1223, Mar. 2006.

\bibitem{3gpp_ts38211_v1850}
{ETSI 3rd Generation Partnership Project (3GPP)}, ``{5G; NR; Physical channels
  and modulation (3GPP TS 38.211 version 18.5.0 Release 18)},'' ETSI, Technical
  Specification ETSI TS 138 211 V18.5.0, Jan. 2025, release 18.

\bibitem{MCC}
X.~Zhu, Y.~Zeng, and T.~Li, ``Coverage-distance and collinearity-minimizing
  pilots for channel estimation in tdd systems,'' \emph{IEEE Wireless
  Communications Letters}, vol.~15, pp. 3741--3745, 2026.

\bibitem{candes}
E.~J. Candès and C.~Fernandez-Granda, ``Towards a mathematical theory of
  super-resolution,'' \emph{Communications on Pure and Applied Mathematics},
  vol.~67, no.~6, pp. 906--956, 2014.

\bibitem{4518398}
M.~Sharp and A.~Scaglione, ``Application of sparse signal recovery to
  pilot-assisted channel estimation,'' in \emph{2008 IEEE International
  Conference on Acoustics, Speech and Signal Processing}, 2008, pp. 3469--3472.

\bibitem{berger2009sparse}
C.~R. Berger, S.~Zhou, W.~Chen, and P.~Willett, ``Sparse channel estimation for
  {OFDM}: Over-complete dictionaries and super-resolution,'' in \emph{2009 IEEE
  10th Workshop on Signal Processing Advances in Wireless
  Communications}.\hskip 1em plus 0.5em minus 0.4em\relax IEEE, 2009, pp.
  196--200.

\bibitem{5454399}
W.~U. Bajwa, J.~Haupt, A.~M. Sayeed, and R.~Nowak, ``Compressed channel
  sensing: A new approach to estimating sparse multipath channels,''
  \emph{Proceedings of the IEEE}, vol.~98, no.~6, pp. 1058--1076, 2010.

\bibitem{KalmanCS}
N.~Vaswani, ``Kalman filtered compressed sensing,'' in \emph{2008 15th IEEE
  International Conference on Image Processing}, 2008, pp. 893--896.

\bibitem{PLAY-CS}
X.~Liu and Y.~Xia, ``A unified algorithmic framework for dynamic compressive
  sensing,'' \emph{Signal Processing}, vol. 232, p. 109926, 2025.

\bibitem{RM-BPDN}
W.~Lu and N.~Vaswani, ``Regularized modified {BPDN} for noisy sparse
  reconstruction with partial erroneous support and signal value knowledge,''
  \emph{IEEE Transactions on Signal Processing}, vol.~60, no.~1, pp. 182--196,
  2012.

\bibitem{6638908}
A.~S. Charles and C.~J. Rozell, ``Dynamic filtering of sparse signals using
  reweighted $\ell$1,'' in \emph{2013 IEEE International Conference on
  Acoustics, Speech and Signal Processing}, 2013, pp. 6451--6455.

\bibitem{ModifiedCS}
N.~Vaswani and W.~Lu, ``Modified-{CS}: Modifying compressive sensing for
  problems with partially known support,'' in \emph{2009 IEEE International
  Symposium on Information Theory}, 2009, pp. 488--492.

\bibitem{6557543}
J.~Ziniel and P.~Schniter, ``Dynamic compressive sensing of time-varying
  signals via approximate message passing,'' \emph{IEEE Transactions on Signal
  Processing}, vol.~61, no.~21, pp. 5270--5284, 2013.

\bibitem{10311526}
Y.~Wan, G.~Liu, A.~Liu, and M.-J. Zhao, ``Robust multi-user channel tracking
  scheme for {5G} new radio,'' \emph{IEEE Transactions on Wireless
  Communications}, vol.~23, no.~6, pp. 5878--5894, 2024.

\bibitem{2Stage}
Y.~Wan and A.~Liu, ``A two-stage {2D} channel extrapolation scheme for tdd {5G}
  {NR} systems,'' \emph{IEEE Transactions on Wireless Communications}, vol.~23,
  no.~8, pp. 8497--8511, 2024.

\bibitem{P7_OTFS}
R.~Hadani, S.~Rakib, M.~Tsatsanis, A.~Monk, A.~J. Goldsmith, A.~F. Molisch, and
  R.~Calderbank, ``Orthogonal time frequency space modulation,'' in \emph{2017
  IEEE wireless communications and networking conference (WCNC)}.\hskip 1em
  plus 0.5em minus 0.4em\relax IEEE, 2017, pp. 1--6.

\bibitem{P8_raviteja2019embedded}
P.~Raviteja, K.~T. Phan, and Y.~Hong, ``Embedded pilot-aided channel estimation
  for {OTFS} in delay--{D}oppler channels,'' \emph{IEEE transactions on
  vehicular technology}, vol.~68, no.~5, pp. 4906--4917, 2019.

\bibitem{TensorBased}
H.~Hou, Y.~Wang, Y.~Zhu, X.~Yi, W.~Wang, D.~T.~M. Slock, and S.~Jin, ``A
  tensor-structured approach to dynamic channel prediction for massive {MIMO}
  systems with temporal non-stationarity,'' \emph{IEEE Transactions on Wireless
  Communications}, vol.~25, pp. 6869--6886, 2026.

\bibitem{matpencil}
W.~Li, H.~Yin, Z.~Qin, Y.~Cao, and M.~Debbah, ``A multi-dimensional matrix
  pencil-based channel prediction method for massive {MIMO} with mobility,''
  \emph{IEEE Transactions on Wireless Communications}, vol.~22, no.~4, pp.
  2215--2230, 2023.

\bibitem{HF_skywave}
D.~Shi, L.~Song, W.~Zhou, X.~Gao, C.-X. Wang, and G.~Ye~Li, ``Channel
  acquisition for {HF} skywave massive {MIMO}-{OFDM} communications,''
  \emph{IEEE Transactions on Wireless Communications}, vol.~22, no.~6, pp.
  4074--4089, 2023.

\bibitem{JCEP}
Y.~Zhu, J.~Zhuang, G.~Sun, H.~Hou, L.~You, and W.~Wang, ``Joint channel
  estimation and prediction for massive {MIMO} with frequency hopping
  sounding,'' \emph{IEEE Transactions on Communications}, vol.~73, no.~7, pp.
  5139--5154, 2025.

\bibitem{dda-net}
\BIBentryALTinterwordspacing
Y.~Ma, X.~Zhu, and T.~Li, ``{DDA}-{N}et: Accurate {TDD} channel estimation via
  deep unfolding the {D}oppler-delay-angle representation of channel signals,''
  2026. [Online]. Available: \url{https://arxiv.org/abs/2604.05389}
\BIBentrySTDinterwordspacing

\bibitem{ASM}
X.~Zhu, Y.~Ma, X.~Li, and T.~Li, ``Alternating subspace method for sparse
  recovery of signals,'' \emph{IEEE Transactions on Signal Processing},
  vol.~74, pp. 2175--2191, 2026.

\bibitem{LSCOA}
E.~K. Ryu and W.~Yin, \emph{Large-Scale Convex Optimization: Algorithms \&
  Analyses via Monotone Operators}.\hskip 1em plus 0.5em minus 0.4em\relax
  Cambridge University Press, Dec. 2022.

\bibitem{ADMM:Boyd}
S.~Boyd, N.~Parikh, E.~Chu, B.~Peleato, and J.~Eckstei, ``Distributed
  optimization and statistical learning via the alternating direction method of
  multipliers,'' \emph{Foundations and Trends in Machine Learning}, vol.~3, pp.
  1--122, 2010.

\bibitem{NIPS2010_2d6cc4b2}
P.~Garrigues and B.~Olshausen, ``Group sparse coding with a {L}aplacian scale
  mixture prior,'' in \emph{Advances in Neural Information Processing Systems},
  J.~Lafferty, C.~Williams, J.~Shawe-Taylor, R.~Zemel, and A.~Culotta, Eds.,
  vol.~23.\hskip 1em plus 0.5em minus 0.4em\relax Curran Associates, Inc.,
  2010.

\bibitem{TIPcrL}
S.~Zhang, Y.~Liu, X.~Li, and D.~Hu, ``Enhancing {ISAR} image efficiently via
  convolutional reweighted $\ell$1 minimization,'' \emph{IEEE Transactions on
  Image Processing}, vol.~30, pp. 4291--4304, 2021.

\bibitem{JunFang2015}
J.~Fang, Y.~Shen, H.~Li, and P.~Wang, ``Pattern-coupled sparse {B}ayesian
  learning for recovery of block-sparse signals,'' \emph{IEEE Transactions on
  Signal Processing}, vol.~63, no.~2, pp. 360--372, 2015.

\bibitem{803501}
M.~Speth, S.~Fechtel, G.~Fock, and H.~Meyr, ``Optimum receiver design for
  wireless broad-band systems using {OFDM}. i,'' \emph{IEEE Transactions on
  Communications}, vol.~47, no.~11, pp. 1668--1677, 1999.

\bibitem{salim2014channel}
O.~H. Salim, A.~A. Nasir, H.~Mehrpouyan, W.~Xiang, S.~Durrani, and R.~A.
  Kennedy, ``Channel, phase noise, and frequency offset in {OFDM} systems:
  Joint estimation, data detection, and hybrid {C}ramer-{R}ao lower bound,''
  \emph{IEEE Transactions on Communications}, vol.~62, no.~9, pp. 3311--3325,
  2014.

\bibitem{booktse}
D.~Tse and P.~Viswanath, \emph{Fundamentals of wireless communication}.\hskip
  1em plus 0.5em minus 0.4em\relax Cambridge university press, 2005.

\bibitem{bookUPA}
H.~Asplund, J.~Karlsson, F.~Kronestedt, E.~Larsson, D.~Astely, P.~von
  Butovitsch, T.~Chapman, M.~Frenne, F.~Ghasemzadeh, M.~Hagstr{\"o}m
  \emph{et~al.}, \emph{Advanced antenna systems for 5G network deployments:
  bridging the gap between theory and practice}.\hskip 1em plus 0.5em minus
  0.4em\relax Academic Press, 2020.

\bibitem{3GPP38901}
{3rd Generation Partnership Project (3GPP)}, ``Study on channel model for
  frequencies from 0.5 to 100 {GHz},'' Technical Report TR 38.901, Dec. 2019,
  version 16.1.0, Release 16.

\bibitem{ISTA1}
I.~Daubechies, M.~Defrise, and C.~De~Mol, ``An iterative thresholding algorithm
  for linear inverse problems with a sparsity constraint,''
  \emph{Communications on Pure and Applied Mathematics}, vol.~57, no.~11, pp.
  1413--1457, 2004.

\bibitem{Benner2009ADI}
P.~Benner, R.-C. Li, and N.~Truhar, ``On the {ADI} method for {S}ylvester
  equations,'' \emph{Journal of Computational and Applied Mathematics}, vol.
  233, no.~4, pp. 1035--1045, 2009.

\bibitem{ADI}
A.~Lu and E.~Wachspress, ``Solution of {L}yapunov equations by alternating
  direction implicit iteration,'' \emph{Computers \& Mathematics with
  Applications}, vol.~21, no.~9, pp. 43--58, 1991.

\bibitem{Simoncini2016MatrixEquations}
V.~Simoncini, ``Computational methods for linear matrix equations,'' \emph{SIAM
  Review}, vol.~58, no.~3, pp. 377--441, 2016.

\bibitem{pmlr-v32-zhong14}
W.~Zhong and J.~Kwok, ``Fast stochastic alternating direction method of
  multipliers,'' in \emph{Proceedings of the 31st International Conference on
  Machine Learning}, ser. Proceedings of Machine Learning Research, E.~P. Xing
  and T.~Jebara, Eds., vol.~32, no.~1.\hskip 1em plus 0.5em minus 0.4em\relax
  Bejing, China: PMLR, 22--24 Jun 2014, pp. 46--54.

\bibitem{siam:StochasticPDHG}
A.~Chambolle, M.~J. Ehrhardt, P.~Richt\'{a}rik, and C.-B. Sch\"{o}nlieb,
  ``Stochastic primal-dual hybrid gradient algorithm with arbitrary sampling
  and imaging applications,'' \emph{SIAM Journal on Optimization}, vol.~28,
  no.~4, pp. 2783--2808, 2018.

\bibitem{GSAng}
A.~Liu, V.~K.~N. Lau, and W.~Dai, ``Exploiting burst-sparsity in massive {MIMO}
  with partial channel support information,'' \emph{IEEE Transactions on
  Wireless Communications}, vol.~15, no.~11, pp. 7820--7830, 2016.

\bibitem{STCS}
L.~Chen, A.~Liu, and X.~Yuan, ``Structured turbo compressed sensing for massive
  {MIMO} channel estimation using a {M}arkov prior,'' \emph{IEEE Transactions
  on Vehicular Technology}, vol.~67, no.~5, pp. 4635--4639, 2018.

\bibitem{SchmidtMUSIC}
R.~Schmidt, ``Multiple emitter location and signal parameter estimation,''
  \emph{IEEE Transactions on Antennas and Propagation}, vol.~34, no.~3, pp.
  276--280, 1986.

\bibitem{Quadriga}
S.~Jaeckel, L.~Raschkowski, K.~Börner, and L.~Thiele, ``Quadriga: A 3-{D}
  multi-cell channel model with time evolution for enabling virtual field
  trials,'' \emph{IEEE Transactions on Antennas and Propagation}, vol.~62,
  no.~6, pp. 3242--3256, 2014.

\bibitem{QNOMP}
Y.~Zeng, M.~Han, X.~Li, and T.~Li, ``{QNOMP}: A quasi-newton-based joint
  multipath optimization method for super-resolution channel estimation and
  extrapolation,'' \emph{IEEE Access}, vol.~13, pp. 212\,652--212\,671, 2025.

\bibitem{OMP}
J.~A. Tropp and A.~C. Gilbert, ``Signal recovery from random measurements via
  orthogonal matching pursuit,'' \emph{IEEE Transactions on Information
  Theory}, vol.~53, no.~12, pp. 4655--4666, Dec. 2007.

\bibitem{beamforming}
K.~Mukkavilli, A.~Sabharwal, E.~Erkip, and B.~Aazhang, ``On beamforming with
  finite rate feedback in multiple-antenna systems,'' \emph{IEEE Transactions
  on Information Theory}, vol.~49, no.~10, pp. 2562--2579, 2003.

\bibitem{C-P}
A.~Chambolle and T.~Pock, ``A first-order primal-dual algorithm for convex
  problems with applications to imaging,'' \emph{Journal of mathematical
  imaging and vision}, vol.~40, no.~1, pp. 120--145, 2011.

\end{thebibliography}

\section*{Biography Section}
% \par\vspace{-20ex}
\begin{IEEEbiographynophoto}{Xu Zhu} received the BS degree in mathematics in 2021 from Peking University, where he is currently pursuing the Ph.D. degree in mathematics with an emphasis on optimization, signal processing and wireless communication.
\end{IEEEbiographynophoto}
% \par\vspace{-20ex}
\begin{IEEEbiographynophoto}{Tiejun Li} received the BS and MS degrees in mathematics from Tsinghua University, Beijing, in 1995 and 1998, respectively, and the PhD degree in mathematics from Peking University, in 2001. He is currently a Boya Distinguished Professor in the School of Mathematical Sciences at Peking University.  His research interest is the stochastic modeling and algorithms in science and engineering, which includes the stochastic models in signal processing, model reduction of complex networks, chemical reaction kinetics, rare events, and biological data analysis.
\end{IEEEbiographynophoto}
% \par\vspace{-3ex}

\end{document}